%%%%%%%%%%%%%%
%%%%%%%%%%%%%%%
\documentclass[12pt,a4paper]{article}
\title{Finite size emptiness formation probability of the  XXZ spin chain at
  $\Delta=-\frac{1}{2}$} 
\date{\today}
\textwidth 16cm
\textheight 23cm

\oddsidemargin 0pt
\topmargin -30pt

%%%%%%%%%%%%%%%
%%%%%%Packages
%%%%%%%%%%%%%%%\
\usepackage{bm,cite,color,verbatim,marvosym,multicol,multirow} 
\usepackage[english]{babel}
\usepackage[dvips]{graphicx}
\usepackage{latexsym,amsfonts,amssymb}
\usepackage{amsmath}
\usepackage{amsthm}
\usepackage{eufrak}
\usepackage{epsf}
\usepackage{array}
\usepackage{arydshln}

\usepackage{xypic}
\usepackage{amscd}
\usepackage[all]{xy}
\usepackage{textcomp} 
\usepackage{makeidx} 
\usepackage{mathrsfs}

%%%%%%%%%%%%%%
%%%%%Equations
%%%%%%%%%%%%%%

\def\eq{\begin{equation}}
\def\en{\end{equation}}

%%%%%%%%%%%%%%
%%%%%%%Symbols
%%%%%%%%%%%%%%

\def\>>{\rangle}
\def\tr{{\rm tr}}

\def\bC{\mathbb C}

\def\cH{\mathcal H}
\def\cG{\mathcal G}

\def\cS{\mathcal S}

\def\cM{\mathcal M}
\def\cD{\mathcal D}
\def\cE{\mathcal E}
\def\cP{\mathcal P}

%%%%%%%%%%
%%%%%Other
%%%%%%%%%%

\newtheorem{proposition}{\emph {Proposition}}
\newtheorem{corollary}{\emph {Corollary}}

\begin{document}	

%%%%%%%%
%%%%Text
%%%%%%%%

%%%%%%%%
%%%%Text
%%%%%%%%
\thispagestyle{empty}

\begin{titlepage}

\maketitle

\vspace{-0.5 truecm}

\centerline{\large Luigi
  Cantini\footnote[1]{LPTM, Universit\'e de Cergy-Pontoise (CNRS UMR
    8089), Site de Saint-Martin,
2 avenue Adolphe Chauvin, F-95302 Cergy-Pontoise Cedex, France.  {\small \tt
      <luigi.cantini@u-cergy.fr>}}
    }

\vspace{.3cm}

\vspace{1.0 cm}

\begin{abstract}
\noindent
In this paper we compute the Emptiness Formation Probability of a
(twisted-) periodic XXZ spin chain of finite length at
$\Delta=-\frac{1}{2}$, thus proving the formulas conjectured by
Razumov and Stroganov \cite{raz-strog1, raz-strog2}. The result is
obtained by exploiting the fact that the ground state of the
inhomogeneous XXZ spin chain at $\Delta=-\frac{1}{2}$ satisfies a set
of qKZ equations associated to $\displaystyle{U_q(\hat{sl_2})}$.

\end{abstract}

\end{titlepage}

\section{Introduction}

The computation of correlation functions in quantum integrable
systems, is in general a quite hard task. One paradigmatic example is
the spin-$\frac{1}{2}$ XXZ spin chain \cite{Baxter}. Even in the study
of the free
fermion case quite interesting mathematical structures have
appeared \cite{mccoy}. 
Starting from the seminal papers of Razumov and Stroganov
\cite{raz-strog1,raz-strog2}, we know that the anti-ferromagnetic
ground state of the XXZ spin chain at the value  $\Delta=-\frac{1}{2}$
of the anisotropy parameter (or equivalently $q=e^{2\pi i/3}$)
presents 
a remarkable combinatorial structure. 
The spin chain hamiltonians with an odd number of spins $N=2n+1$ and
periodic boundary conditions or even number of spins $N=2n$ and
twisted-periodic boundary conditions are 
related through a change of basis to the Markov matrix of a stochastic
loop model \cite{BdGN}. 
In the loop basis, the ``ground state'' is actually the steady state
probability of the stochastic loop model. 
The most astonishing discovery was made by
Razumov and Stroganov  \cite{raz-strogO(1)_1}, who observed that, once
properly normalized, the  components of the steady state are integer
numbers enumerating Fully Packed Loop configurations on a square grid.
This conjecture has been eventually proven in \cite{cantini-sportiello}. 

Despite being the XXZ spin chain at $\Delta=-\frac{1}{2}$ a fully
interacting system, several of its correlation functions have simple 
exact formulate even at finite size.
This is the case of the Emptiness Formation Probability
(in short EFP), which is the probability that $k$ consecutive spins are in the
up direction  in a chain of length $N$.  
In their original papers \cite{raz-strog1,raz-strog2}, Razumov and
Stroganov have conjectured exact factorized formulas for the
EFP in terms of products of factorials. The aim of the present paper
is to prove these conjectures.  

\vskip .5cm

The XXZ spin chain with odd size has two ground states $\Psi^+_{2n+1}$
and $\Psi^-_{2n+1}$, related by a spin flip on each site. 
Razumov and Stroganov have conjectured \cite{raz-strog1}
that $E^{-}_{2n+1}(k)$, the EFP  of a k-string of spins up in the state
$\Psi^-_{2n+1}$,  satisfies  
\begin{equation}\label{recurE--}
\frac{E_{2n+1}^{-}(k-1)}{E^{-}_{2n+1}(k)}=
\frac{(2k-2)!(2k-1)!(2n+k)!(n-k)!}{(k-1)!(3k-2)!(2n-k+1)!(n+k-1)!}.  
\end{equation}
Strangely enough, Razumov and Stroganov didn't provide the analogous
formula for the state $\Psi^+_{2n+1}$, which reads
\eq\label{EFP++}
\frac{E^{+}_{2n+1}(k-1)}{E^{+}_{2n+1}(k)}=
\frac{(2k-2)!(2k-1)!(2n+k)!(n-k+1)!
}{(k-1)!(3k-2)!(2n-k+1)!(n+k)!}
\en
In particular, the probability of having a string of spins-up of
length $n$ (or $n+1$) in a chain of length $2n+1$ is equal to the inverse
of $A_{HT}(2n+1)$, the number of  of Half Turn 
Symmetric Alternating Sign Matrices of size $2n+1$,  
\eq\label{sum-conj1}
E_{2n+1}^{-}(n)= E_{2n+1}^{+}(n+1)= A_{HT}(2n+1)^{-1} =
\prod_{j=1}^n\frac{(2j-1)!^2(2j)!^2}{(j-1)!j!(3j-1)!(3j)!} 
\en

In the case of a spin chain with even length and twisted
boundary conditions, the ground state is unique. Razumov and Stroganov
have conjectured \cite{raz-strog2} that $E^{e}_{2n}(k)$, the EFP of a
k-string of spins up satisfies
\eq\label{EFPee*}
\frac{E^{e}_{2n}(k-1)}{E^{e}_{2n}(k)}=
\frac{(2k-2)!(2k-1)!(2n+k-1)!(n-k)!}
     {(k-1)! (3k-2)!(2n-k)!(n+k-1)!} 
\en
The previous equation implies that in the case $k=n$ 
\eq\label{sum-conj2}
E^{e}_{2n}(k)=A_{HT}(2n)^{-1}.
\en
Unlike the ground states of the odd size chains, whose components can
be chosen to be real, the even size ground state has complex valued
components, 
therefore we can consider also  $E^{\tilde
  e}_{2n}(k)$, a sort of ``pseudo'' EFP obtained by sandwiching the
ground state with itself 
(and not with its complex conjugate). The ratio of ``pseudo'' EFPs
corresponding to the same size of the spin chain has a factorized form
given by   
\eq\label{EFPee}
\frac{E^{\tilde e}_{2n}(k-1)}{E^{\tilde e}_{2n}(k)}=
-q\frac{(2k-2)!(2k-1)!(2n+k-1)!(n-k)!
}{(k-1)!(3k-3)!(3k-1)(2n-k)!(n+k-1)!}.
\en
It turns out that, apart for the factor $-q$, the ratio in
eq.(\ref{EFPee}) can be written  as a ratio of enumerations of
$k$-Punctured Cyclically Symmetric Self-Complementary Plane Partitions
(PCSSCPP) of size $2n$, i.e. rhombus tilings of a regular hexagon of 
side length $2n$, which are symmetric under a $\pi/3$ rotation around the
center of the hexagon and with a star shaped frozen region of size
$k$,  as exemplified in Figure \ref{figura1}.
\begin{figure}
\begin{center}
  \includegraphics[scale=.6]{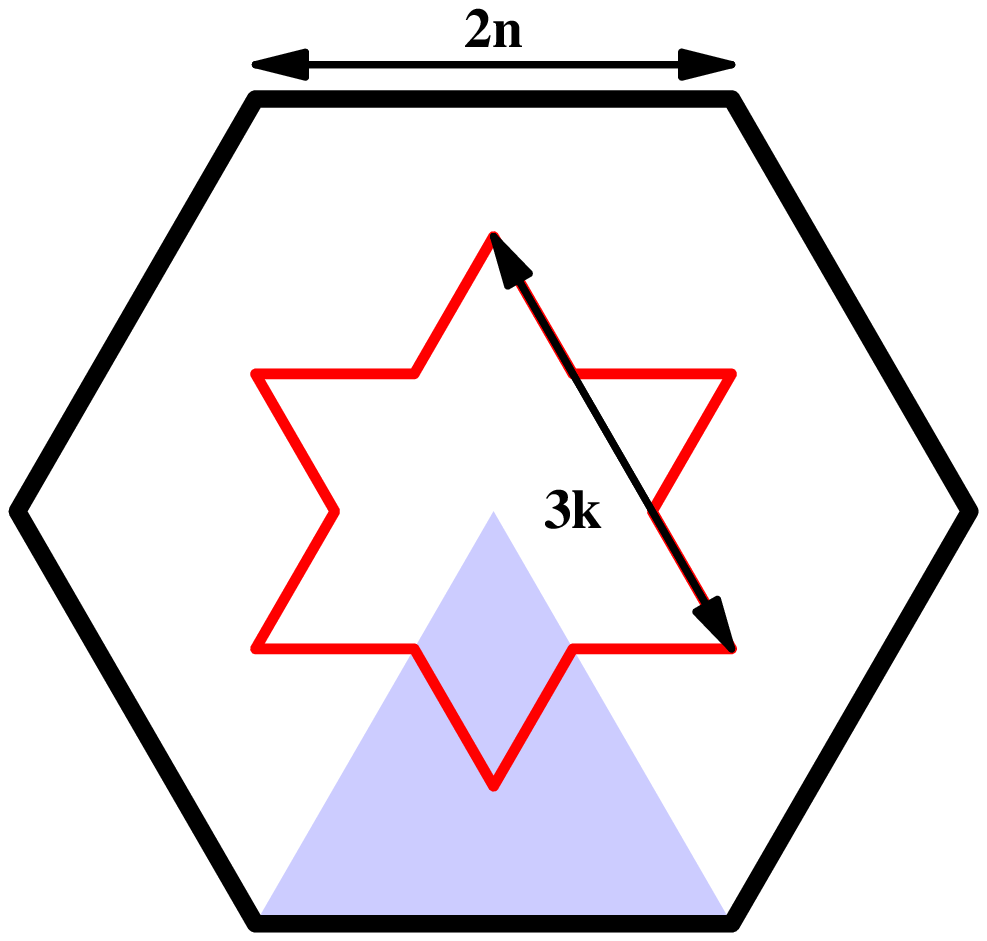}~~~~~~\includegraphics[scale=.6]{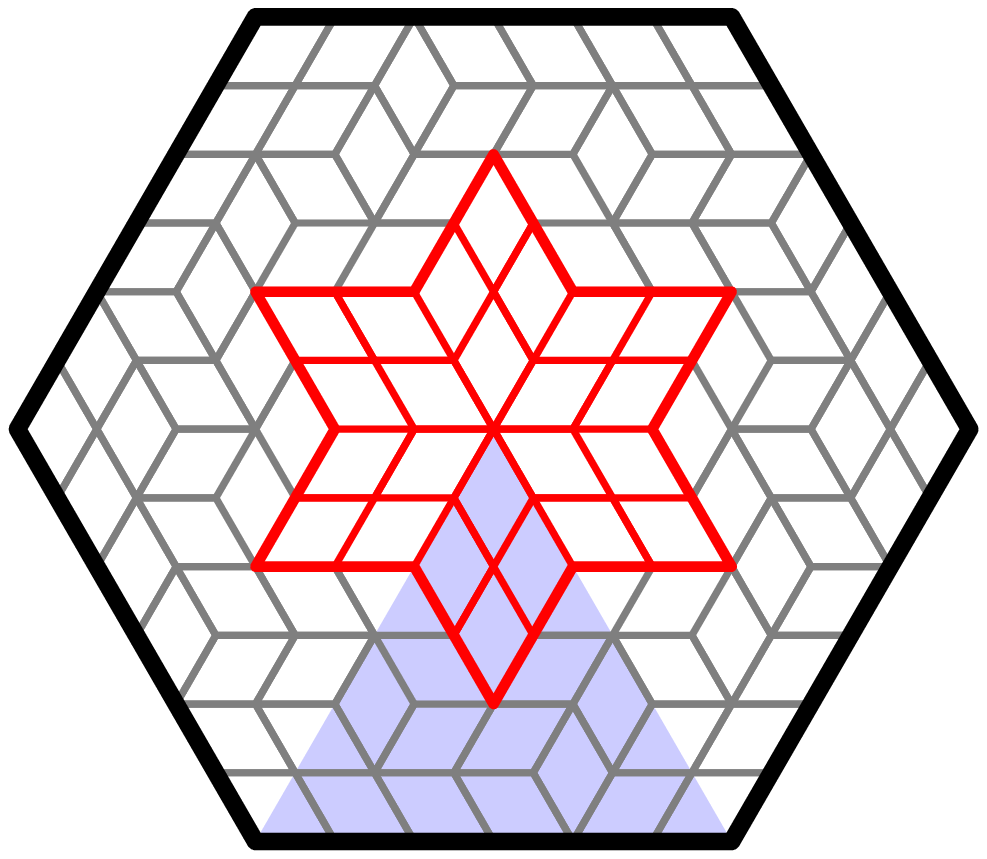}
\end{center}
\caption{Domain tiled by $k$-Punctured Cyclically Symmetric
  Self-Complementary Plane Partitions, with an example of tiling for 
  $n=3$ and $k=2$. The shadowed region indicates a fundamental domain.}\label{figura1}
\end{figure}
Calling these enumerations $CSSCPP(2n,k)$ we have  
\eq
\frac{E^{\tilde e}_{2n}(k-1)}{E^{\tilde e}_{2n}(k)} =-q \frac{CSSCPP(2n,k-1)}{CSSCPP(2n,k)}.
\en
For $k=n$ one obtains 
\eq\label{sum-conj3}
E^{\tilde e}_{2n}(n) = (-q)^{n}A_n^{-2}= (-q)^{n}CSSCPP(2n)^{-1}
\en
where $A_n$ is the number of Alternating Sign Matrices of size $n$
\eq
A_n=\prod_{j=0}^{n-1} \frac{(3j+1)!}{(n+j)!},
\en
and $CSSCPP(2n)$ is the number of  Cyclically Symmetric
Self-Complementary Plane Partitions in an hexagon of size $2n$. 
The enumerations $CSSCPP(2n,k)$ are easily computed by applying a result
of Ciucu \cite{ciucu} concerning enumerations of dimer coverings of
planar graphs with reflection symmetry. This  will be explained
briefly in Appendix \ref{plane-part}.  
 
\vskip .5cm 

Some partial results concerning the (pseudo)-norm, and the EFP have
been obtained in the literature. In \cite{pdf-pzj-jbz}, by
cleverly exploiting  
the relation between a natural degenerate scalar product in the loop basis and the
usual scalar product in the spin basis, Di Francesco and collaborators
have proven eq.(\ref{sum-conj1}) and eq.(\ref{sum-conj3}).

In the large $N$ limit $N\rightarrow \infty$, the EFP have been
studied Maillet and collaborators \cite{maillet-emptiness}. Of course
the conjectures (\ref{recurE--}, \ref{EFP++}, \ref{EFPee*})    
must coincide and must give for the thermodynamic limit of the EFP
\eq
\lim_{N\rightarrow \infty} E_N(k) = \left(\frac{\sqrt{3}}{2}
\right)^{3k^2}\prod_{j=1}^k
\frac{\Gamma(j-1/3)\Gamma(j+1/3)}{\Gamma(j-1/2)\Gamma(j+1/2)}. 
\en
This formula has been proven in \cite{maillet-emptiness} by 
specializing to $\Delta=-\frac{1}{2}$ a
multiple integral representation for the correlation functions, which
is valid for generic values of the anisotropy parameter $\Delta$.

\vskip .5cm

The most effective technique 
which has allowed to compute (partial) sum of 
components in the loop or in the spin basis has been pushed forward by
Di Francesco and Zinn-Justin \cite{pdf-pzj-1} in the context of the 
periodic loop model with an even number of sites. 
They introduced spectral parameters in the model 
in such a way to preserve its integrability, the original model
being recovered once the spectral parameters are set to $1$. In this
way the components of the ground state become homogeneous polynomials
in the spectral parameters, which satisfy certain relation under
exchange or specialization of the spectral parameters. 
As noticed first by Pasquier \cite{pasquier} and largely developed by
Di Francesco and Zinn-Justin \cite{pdf-pzj-qKZ1} the exchange
relations satisfied by the ground state are a special case ($q=e^{2\pi
  i/3}$) of the very much studied quantum 
Knizhnik--Zamolodchikov equations (qKZ)
\cite{Frenkel-Reshetikhin}. 
In \cite{r-s-pzj} the authors have applied this idea to the XXZ spin
chain with spectral parameters and have shown that the properly normalized
ground state of the spin chain with periodic or twisted periodic
boundary conditions satisfies a special case of the $U_q(sl_2)$ qKZ
equations at level $1$. Here we employ this property in order to compute the
Emptiness Formation Probability. Our main idea is to consider a
generalization of the EFP with spectral parameters EFP, which is
constructed from the solution of the $U_q(sl_2)$ qKZ equations for
generic $q$ (see
eqs.(\ref{def-inhom-EFP},\ref{def-inhom-pseudo}). This 
``inhomogeneous'' EFP has certain symmetry and 
recursion properties that completely fix it  
in the same spirit as the recursion relations of the 6-vertex model
with Domain Wall Boundary Conditions completely determine its
partition function.  This will allow us to present an explicit
determinantal formula for the inhomogeneous EFP valid at $q=e^{2\pi
  i/3}$, and upon specialization of the spectral parameters will allow
to obtain the formulas (\ref{recurE--}, \ref{EFP++}, \ref{EFPee*},
\ref{EFPee}).   

\vskip .5cm

The idea to use the solution of the qKZ equation to compute the
inhomogenous version of a correlation function can be in principle
adapted to several other 
models like: XXZ spin chain with different boundary
conditions, fused XXZ spin chain, $U_q(sl_n)$ spin chain or even XYZ
spin chain etc. \cite{boundary,fused,SU(N),XYZ}. Indeed, in all these cases,
by properly tuning the parameters (generalizing the relation
$q=e^{2\pi i/3}$), one has a so called ``combinatorial point'' at which
the ground state energy per site doesn't get finite size
corrections. By reasonings similar to the one in \cite{r-s-pzj}, one
can argue that the ground state \emph{with   spectral parameters}
satisfies a qKZ equation (or, in the case of the XYZ spin chain, an
elliptic version of it). 

Whether this idea could lead to
other exact finite size formulae for some correlation function is an
open question that in our opinion deserves further investigation.

\vskip .5cm

The paper is organized as follows. In Section \ref{conv}, after having
recalled some basic facts about the XXZ spin chain, following
\cite{r-s-pzj} we present the exchange equations satisfied by the
ground state at $\Delta=-\frac{1}{2}$, then in Section
\ref{rec-subsect} we derive the recursion relations satisfied by the
solutions of the $U_q(sl_2)$ qKZ equations at level $1$. In
Section \ref{inhom-section} we define the inhomogeneous version of the
(pseudo) EFP, constructed using the solutions of the qKZ equations. We
derive first its symmetries and then in Section 
\ref{rec-section-q-gen} we derive the recursion relations which
completely determine it. 
In section \ref{combinatorial-pol} we will restrict to $q=e^{2\pi
  i/3}$ and, by showing that certain determinantal expressions satisfy the same
recursion relations as the inhomogeneous EFP, we produce a
representation of this inhomogeneous EFP whose homogeneous
specialization  is considered in Section \ref{hom-sect}, where we prove
the main conjectures. 
In Appendix \ref{fact-det} we compute the
determinants of a family of matrices which are relevant for the
computation of the homogeneous specialization considered in Section
\ref{hom-sect}. In Appendix \ref{plane-part} we compute the lozenge
tilings enumerations $CSSCPP(2n,k)$.

\section{XXZ spin chain at $\Delta=-\frac{1}{2}$ and qKZ equations}\label{conv}

The hamiltonian of the XXZ spin chain acts on a vector space
$\cH_N=(\bC^2)^{\otimes N}$ that
consists of $N$ copies of $\bC^2$ each one labeled by an index
$i$. The hamiltonian is written in terms of 
operators $\sigma^\alpha_i$ which are Pauli matrices acting 
locally on the $i$-th component $\bC_i^2$ 
\eq\label{xxz-ham}
H_N(\Delta) = -\frac{1}{2}\sum_{i=1}^N
\sigma^x_i\sigma^x_{i+1}+\sigma^y_i\sigma^y_{i+1}+\Delta
\sigma^z_i\sigma^z_{i+1}.
\en
It is convenient to parametrize the anisotropy parameter
as $\Delta=\frac{q+q^{-1}}{2}$. 
The model is completely specified once the boundary
conditions are provided. Here we will consider:
\begin{itemize}
\item periodic boundary conditions
for odd values of the length of the spin chain, $N=2n+1$, 
i.e. $\sigma_{N+1}^\alpha = \sigma_{1}^\alpha$ 
\item twisted periodic boundary conditions
for even values of the length of the spin chain, $N=2n$, 
i.e. $\sigma_{N+1}^z = \sigma_{1}^z$, while 
$\sigma^\pm_{N+1}=e^{\pm i \frac{2\pi}{3}}\sigma^\pm_{N+1}$, where $\sigma^\pm=
\sigma^x+\pm i \sigma ^y$.
\end{itemize}
It is well know \cite{Baxter} that the Hamiltonian (\ref{xxz-ham}), for
generic values of the parameter $\Delta$ and of the twisting, is
the logarithmic derivative of an integrable transfer matrix. In order
to define the transfer matrix we need the $R$-matrix and the
twist matrix. 
In the present context the $R$-matrix is
an operator depending on a spectral parameter $z$, which acts on a tensor
product $\bC^2_i\otimes \bC^2_j$. Introducing the basis of $\bC_i^2$
$$
e^{\uparrow}_i=\left(\begin{array}{c}1\\0 \end{array} \right),~~~e^{\downarrow}_i=
\left(\begin{array}{c}0\\1 \end{array} \right) .
$$
we write $R_{i,j}(z)$ in the basis 
$\{e^{\uparrow}_i\otimes
e^{\uparrow}_j,e^{\uparrow}_i\otimes
e^{\downarrow}_j,e^{\downarrow}_i\otimes
e^{\uparrow}_j,e^{\downarrow}_i\otimes e^{\downarrow}_j \} $ of $\bC_i^2
\otimes \bC^2_j$ as 
\eq\label{R-matr}
R_{i,j}(z)= \left( 
\begin{array}{cccc}
a(z) & 0 &0 &0 \\
0 & b(z) &c_1(z) &0 \\
0 & c_2(z) & b(z) &0 \\
0 & 0 &0 &a(z) 
\end{array}
\right)
\en 
with
\eq\label{coef-Rmatr}
a(z)=\frac{qz -q^{-1}}{q-q^{-1}z},~~~~b(x)=\frac{z -1}{q-q^{-1}z},
~~~~c_1(z)=\frac{(q -q^{-1})z}{q-q^{-1}z}, ~~~~c_2(z)=\frac{(q
  -q^{-1})}{q-q^{-1}z}  . 
\en
The twist matrix $\Omega(\phi)$ acts on a single $\bC_i^2$ and in the
basis $(e^{\uparrow}_i,e^{\downarrow}_i$) it reads
\eq
\Omega(\phi) = \left(
\begin{array}{cc}
e^{i\phi} & 0\\
0 & e^{-i\phi}
\end{array}
\right).
\en
Using both the twist and the $R$-matrix we construct the family of
transfer matrices 
\eq\label{mono}
T_N(y|{\bf z}_{\{1,\dots,
  N\}},\phi) = \tr_0\left[R_{0,1}(y/z_1)R_{0,2}(y/z_2)\dots
  R_{0,N}(y/z_N) \Omega_0(\phi) \right]
\en
depending on $N$ ``vertical'' spectral parameters ${\bf z}_{\{1,\dots,
  N\}}$\footnote{Our convention for a ordered string of variables labeled by
  an index is to use a bold character and a label for
  the ordered set of indices of the variables: ${\bf x}_{\{a_1,\dots,
  a_N\}}= \{x_{a_1},\dots,x_{a_N} \}$. Often, when clear from the context, we
  will omit the label ${\{a_1,\dots,a_N\}}$ and write ${\bf x}$ for ${\bf x}_{\{a_1,\dots,
  a_N\}}$.} and a single 
``horizontal'' spectral parameter $y$. Thanks to the commutation
relation
\eq
[R_{i,j}(x),\Omega_i(\phi)\otimes \Omega_j(\phi)]=0
\en
and the Yang-Baxter equation
\eq\label{YBE}
R_{i,j}(z_i/z_j)R_{i,k}(z_i/z_k)R_{j,k}(z_j/z_k)=
R_{j,k}(z_j/z_k)R_{i,k}(z_i/z_k)R_{i,j}(z_i/z_j) 
\en
the transfer matrices for different values of $y$ commute
\eq
[T_N(y'|{\bf z}_{\{1,\dots,N\}},\phi),T_N(y''|{\bf z}_{\{1,\dots,N\}},\phi)]=0.
\en
The hamiltonian of the XXZ spin chain is given by
\eq
\frac{1}{T_N(1|{\bf 1},\phi)}\frac{dT_N(y|{\bf 1},\phi)}{dy}{\Bigg |
}_{y=1} = -\frac{1}{q-q^{-1}} \left( H_N(\Delta) -\frac{3N}{2}\Delta
\right).
\en
At $\Delta=-\frac{1}{2}$ and for generic values of the
vertical spectral parameters, both in the
odd size case with  periodic boundary conditions and in the even size
case with twisted boundary conditions,  the transfer matrix
has an eigenvalue equal to $1$. 
\begin{itemize}
\item 
  When $N=2n+1$, the eigenspace with eigenvalue $1$ is two-fold
  degenerate, $\Psi^\pm_{2n+1}({\bf z})$, corresponding to two the
  values $\pm \frac{1}{2}$ of the total spin
  $S^z=\frac{1}{2}\sum_{i=1}^{N}\sigma^z_i$
\eq
S^z \Psi^\pm_{2n+1}({\bf z})=\pm \Psi^\pm_{2n+1}({\bf z}) 
\en
The two eigenstates are related by a flipping of all the spins
\eq
\Psi^+_{2n+1}({\bf z})  = \prod_{i=1}^N \sigma^x_i \cdot\Psi^-_{2n+1}({\bf z}).
\en

\item 
When $N=2n$ there is a single vector $\Psi^e_{2n}({\bf z})$ with
eigenvalue $1$. It is in the zero sector of the total spin,
\eq
S^z \Psi^e_{2n}({\bf z})=0,
\en
These eigenstates reduce to the
anti-ferromagnetic ground state(s) of the XXZ spin chain when the spectral
parameters are specialized at $z_i=1$. 
\end{itemize}

\subsection{Exchange relations at $\Delta = -\frac{1}{2}$}

A crucial observation made in \cite{r-s-pzj} was that, for an appropriate
choice of the normalization of $\Psi^\mu_N({\bf z})$, the eigenvector equation
\eq
T_N(y|{\bf 1},\phi)\Psi^\mu_N({\bf z}) = \Psi^\mu_N({\bf z})
\en
is equivalent to a set of exchange relations.
Define the exchange operator $P_{i,j}
(e^\mu_i\otimes e^\nu_j) =e^\nu_i\otimes e^\mu_j $, the left rotation
operator $\sigma(v_1\otimes v_2\otimes\cdots \otimes v_{N-1}\otimes
v_N)=v_2\otimes\cdots \otimes v_{N-1}\otimes
v_N\otimes v_1$ and let $\check R_{i,i+1}(z) = P_{i,i+1}R_{i,i+1}(z)$, then
 $\Psi^\mu_N({\bf z})$, as a polynomial  of minimal
degree in the spectral parameters $z_i$, is determined up to a
constant factor, by the following set of equations \cite{r-s-pzj}
\begin{align}\label{qKZ1}
\check R_{i,i+1}(z_{i+1}/z_i)\Psi^\mu_N(z_1,\dots,z_i,z_{i+1},\dots,z_N) &=
\Psi^\mu_N(z_1,\dots,z_{i+1},z_{i},\dots,z_N) \\[5pt]\label{qKZ2}
\sigma
\Psi^\mu_N(z_1,z_2,\dots,z_{N-1},z_N)&=D\Psi^\mu_N(z_2,\dots,z_{N-1},z_N,s
z_1). 
\end{align}
with $D=s=1$. These equations can be seen as the special case
$q=e^{2\pi i/3}$ of the level $1$ qKZ equations
\cite{Frenkel-Reshetikhin}, which
corresponds to generic $q$, $s=q^6$ and
$D=q^{3N}q^{3(s^z_N+1)/2}$. The solution of the level $1$ 
qKZ equations can be normalized in such a way that they become
polynomials in the variables $z_i$ \cite{r-s-pzj} of degree $n-1$ in
the case of even size $N=2n$, and degree $n$ in the odd case
$N=2n+1$. 

Using the projectors $p_i^\pm= \frac{1\pm\sigma_i^z}{2}$ of the $i$-th
spin in the  up/down direction, let us write the exchange
eqs.(\ref{qKZ1}) in components. 
If we have a spin up at
site $i$ and a spin down at site $i+1$ or viceversa, we have
\eq\label{triang-qKZ}
\begin{split}
p_i^- p_{i+1}^+ \Psi^\mu
(z_{i},z_{i+1})&=\sigma_i^- \sigma_{i+1}^+ \frac{(q
  z_{i}-q^{-1}z_{i+1})\Psi^\mu
(z_{i+1},z_{i})
  -(q-q^{-1})z_{i}\Psi^\mu
(z_{i},z_{i+1})}{z_{i+1}-z_{i}} \\
p_i^+ p_{i+1}^-\Psi^\mu
(z_{i},z_{i+1})&=\sigma_i^+ \sigma_{i+1}^-\frac{(q
  z_{i}-q^{-1}z_{i+1})\Psi^\mu
(z_{i+1},z_{i})
  -(q-q^{-1})z_{i+1}\Psi^\mu
(z_{i},z_{i+1})}{z_{i+1}-z_{i}}
\end{split}
\en
These equations form a triangular system. Starting from a given
component we can reconstruct all the others repeatedly using
eqs.(\ref{triang-qKZ}). Therefore if we want to show that two a priori
distinct solutions of the qKZ equations actually coincide, it is
enough to check the equality of one of their components.  

When there are two consecutive spins pointing in the same direction
at positions $i$ and $i+1$, then the first of the qKZ equations reads
\eq
\begin{split}
p_i^\pm p_{i+1}^\pm \Psi^\mu
(\dots,z_{i+1},z_{i},\dots) =
\frac{qz_{i+1}-q^{-1}z_i}{qz_{i}-q^{-1}z_{i+1}}p_i^\pm p_{i+1}^\pm
\Psi^\mu
(\dots,z_{i},z_{i+1},\dots)
\end{split}
\en
which means that the components having two consecutive spins up or
down at positions $i$ and $i+1$,  
$p_i^\pm p_{i+1}^\pm \Psi^\mu
(\dots,z_{i+1},z_{i},\dots)$ have a factor
$qz_{i}-q^{-1}z_{i+1}$ 
\eq
p_i^\pm p_{i+1}^\pm\Psi^\mu
(\dots,z_{i},z_{i+1},\dots) =
(qz_{i}-q^{-1}z_{i+1}) \tilde\Psi^{\mu,\pm}_{i, i+1}(\dots,z_{i},z_{i+1},\dots)
\en
and the vectors $\tilde\Psi^{\mu,\pm}_{i, i+1}(\dots,z_{i},z_{i+1},\dots)$ are
symmetric under exchange $z_i\leftrightarrow z_{i+1}$.
Another useful relation is obtained by considering the matrix $e_i$,  
which is proportional to a projector and is a generator of the
Temperley-Lieb algebra
\eq\label{TL-gen}
e_{i}=\left(\begin{array}{cccc}  
0 & 0 & 0 & 0 \\
0 & -q & 1 & 0 \\
0 & 1 & -q^{-1} & 0 \\
0 & 0 & 0 & 0 
\end{array}
\right),~~~~~~~~~~~~\begin{array}{l}  
e_i^2= \tau e_i,~~~~\tau=-q-q^{-1}\\[5pt]
e_ie_{i\pm 1}e_i = e_i\\[5pt]
e_ie_j=e_je_i
~~~\textrm{for}~~~~|i-j|>1
\end{array}
\en
The matrix $e_i$ is preserved under multiplication
by a $\check R$-matrix for any value of the spectral parameter 
\eq
e_i\check R_{i,i+1}(z)=\check R_{i,i+1}(z)e_i =e_i.
\en 
By applying $e_i$ to the left of the first of the qKZ
eqs.(\ref{qKZ1}) we find  
\eq\label{e_i-symm}
e_i \Psi^\mu(\dots,z_{i},z_{i+1},\dots) = e_i
\Psi^\mu(\dots,z_{i+1},z_{i},\dots). 
\en
The components with most consecutive aligned spins have a completely
factorized form 
\eq\label{factor-comp}
\begin{split}
\Psi^e_{\underbrace{\uparrow,\dots,\uparrow}_n,\underbrace{\downarrow,\dots,\downarrow}_n}({\bf z})&= \prod_{1\leq i< j\leq
  n}\frac{qz_i-q^{-1}z_{j}}{q-q^{-1}} \prod_{n+1\leq i< j\leq
  2n}\frac{qz_i-q^{-1}z_{j}}{q-q^{-1}} \\
\Psi^+_{\underbrace{\uparrow,\dots,\uparrow}_{n+1},\underbrace{\downarrow,\dots,\downarrow}_n}({\bf z})&= \prod_{1\leq i< j\leq
  n+1}\frac{qz_i-q^{-1}z_{j}}{q-q^{-1}} \prod_{n+2\leq i< j\leq
  2n+1}\frac{qz_i-q^{-1}z_{j}}{q-q^{-1}}\prod_{i=n+2}^{2n+1}z_i \\
\Psi^-_{\underbrace{\uparrow,\dots,\uparrow}_n,\underbrace{\downarrow,\dots,\downarrow}_{n+1}}({\bf z})&= \prod_{1\leq i< j\leq
  n}\frac{qz_i-q^{-1}z_{j}}{q-q^{-1}} \prod_{n+1\leq i< j\leq
  2n+1}\frac{qz_i-q^{-1}z_{j}}{q-q^{-1}} 
\end{split}
\en
where the residual normalization ambiguity has been fixed by requiring
these components to be equal to $1$ for $z_i=1$.
From eqs.(\ref{factor-comp}) we see that the maximally factorized
components satisfy (among others) the following relations
\eq\label{diff-size-eq}
\begin{split}
  \Psi^+_{\underbrace{\uparrow,\dots,\uparrow}_{n+1},\underbrace{\downarrow,\dots,\downarrow}_n}({\bf
    z})&= (1-q^{-2})^{n}\Psi^e_{\underbrace{\uparrow,\dots,\uparrow}_{n+1},\underbrace{\downarrow,\dots,\downarrow}_{n+1}}({\bf 
  z})|_{z_{2n+2}=0}\\[5pt]
\Psi^e_{\underbrace{\uparrow,\dots,\uparrow}_n,\underbrace{\downarrow,\dots,\downarrow}_n}({\bf z})&= (1-q^{2})^n \lim_{z_{2n+1}\rightarrow
  \infty}z_{2n+1}^{-n}\Psi^-_{\underbrace{\uparrow,\dots,\uparrow}_{n},\underbrace{\downarrow,\dots,\downarrow}_{n+1}}({\bf z}) 
\end{split}.
\en
Using the triangularity of the eqs.(\ref{triang-qKZ}) we can conclude
that the eqs.(\ref{diff-size-eq}) induce equalities 
between components of $\Psi^+_{2n+1}({\bf z})$  or $\Psi^e_{2n}({\bf z})$
and components of $\Psi^e_{2n+2}({\bf z})$ or $\Psi^-_{2n+1}({\bf z})
$ with the last spin down, i.e.  
\eq\label{zero-sp}
\begin{split}
\Psi^+_{2n+1}({\bf z})\otimes e_{2n+2}^\downarrow &=
(1-q^{-2})^np^-_{2n+2}\Psi^e_{2n+2}({\bf z})|_{z_{2n+2}=0}\\ 
\Psi^e_{2n}({\bf z}) \otimes e_{2n+1}^\downarrow &= (1-q^{2})^n
\lim_{z_{2n+1}\rightarrow
  \infty}z_{2n+1}^{-n}p^-_{2n+1}\Psi^-_{2n+1}({\bf z})  
\end{split}
\en
We will use these equations in order to find a relation among the
inhomogeneous versions of the EFP that we shall introduce in Section
\ref{inhom-section}.  

\subsection{Recursion relation}\label{rec-subsect}

We claim that, upon specialization $z_{i+1}= q^2z_i$ the solution of
the qKZ equation for $N$ spins reduces to the solution of the same
system of equations for $N-2$ spins. In order to make the previous
statement more precise we need to introduce some notation. Let $v_i$
be the vectors which are in the image of the projectors proportional
to the generator of the Temperley-Lieb algebra  $e_i$ 
\eq
v_i = e_i^{\uparrow}\otimes e_{i+1}^{\downarrow}-q^{-1}
e_{i}^{\downarrow}\otimes e_{i+1}^{\uparrow},~~~~~~e_iv_i= -(q+q^{-1})v_i
\en    
Introduce the injective map 
\eq
\Phi_N^{(i)} :(\bC^{2})^{\otimes N} \rightarrow (\bC^{2})^{\otimes
  N+2},
\en
which inserts the vector $v_i$ at position $(i,i+1)$ and shift by two
steps the indices of the sites with $j\geq i$, i.e. on a basis 
\eq
\Phi_N^{(i)} (e_1^{a_1}\otimes \cdots \otimes e_{i-1}^{a_{i-1}}
\otimes e_{i}^{a_{i}}\otimes \cdots \otimes e_{N}^{a_{N}}  ) =
e_1^{a_1}\otimes \cdots \otimes e_{i-1}^{a_{i-1}} \otimes v_i \otimes
e_{i+2}^{a_{i}}\otimes \cdots \otimes e_{N+2}^{a_{N}} .
\en
Then we claim that
\begin{proposition}
The solutions of the exchange equations (\ref{qKZ1}), with the 
 ``boundary conditions'' given by eqs.(\ref{factor-comp}) satisfy the
following  recursion relations 
\eq\label{recursion-psi}
\Psi^{\mu}_N({\bf z})|_{z_{i+1}=q^2 z_i} =
(-q)^{f_N(i)}(q^2z_i)^{\delta(\mu)}\prod_{j=1}^{i-1}\frac{qz_{j}-q^{-1}z_i}{q-q^{-1}} 
\prod_{j=i+1}^{N}\frac{q^{3}z_i-q^{-1}z_{j}}{q-q^{-1}}   \Phi_{N-2}^{(i)}
(\Psi^{\mu}_{N-2})({\bf z}_{\{\widehat{i,i+1}\}}) 
\en
with $\delta(e)=\delta(-)=0$, $\delta(+)=1$, $f_N(i)=i-\lfloor
\frac{N}{2}\rfloor$ and ${\bf z}_{\{\widehat{i,i+1}\}}= \{z_1,
\dots,\widehat{z_i},\widehat{z_{i+1}},\dots,z_N\}$, i.e. the ordered set ${\bf
  z}$ from which the variables $z_i$ and $z_{i+1}$ are removed. 
\end{proposition}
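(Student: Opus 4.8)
The plan is to prove (\ref{recursion-psi}) by the strategy standard for qKZ recursions: first show that the left-hand side lies in the image of the insertion map $\Phi^{(i)}_{N-2}$, then identify the preimage (divided by the scalar prefactor) with $\Psi^\mu_{N-2}$ through the uniqueness encoded in the triangular system (\ref{triang-qKZ}). First I would establish that $\Psi^\mu_N(\mathbf{z})|_{z_{i+1}=q^2 z_i}\in\mathrm{Im}\,\Phi^{(i)}_{N-2}$. The discussion following (\ref{triang-qKZ}) shows that any component with two aligned spins at sites $(i,i+1)$ carries the factor $qz_i-q^{-1}z_{i+1}$, which vanishes at $z_{i+1}=q^2 z_i$; hence the $\uparrow\uparrow$ and $\downarrow\downarrow$ components at $(i,i+1)$ drop out. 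For the two surviving components, substituting $z_{i+1}=q^2 z_i$ in the first line of (\ref{triang-qKZ}) kills the term proportional to $qz_i-q^{-1}z_{i+1}$ and leaves the $\downarrow\uparrow$ component equal to $-q^{-1}$ times the $\uparrow\downarrow$ one, because the surviving coefficient $-(q-q^{-1})z_i/(z_{i+1}-z_i)$ equals $-q^{-1}$ there. This is exactly the ratio defining $v_i=e^\uparrow_i\otimes e^\downarrow_{i+1}-q^{-1}e^\downarrow_i\otimes e^\uparrow_{i+1}$, so the specialised vector factorises through $v_i$ at $(i,i+1)$ and equals $\Phi^{(i)}_{N-2}(\tilde\Psi)$ for a unique $\tilde\Psi\in(\bC^2)^{\otimes(N-2)}$.

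Next I would show that $\tilde\Psi$, divided by the prefactor, solves the exchange system (\ref{qKZ1}) for $N-2$ variables. For bonds not touching the frozen pair this is immediate, since the corresponding $\check R_{j,j+1}(z_{j+1}/z_j)$ acts on spectator sites and commutes with the insertion, so the relation descends directly. The crux is the bond straddling the removed pair. Here one uses the Yang--Baxter equation (\ref{YBE}) together with $e_i\check R_{i,i+1}(z)=e_i$ and $e_i v_i=-(q+q^{-1})v_i$: at $z_{i+1}=q^2 z_i$ the two-site space $(i,i+1)$ fuses to the $U_q(sl_2)$ singlet $v_i$, and transporting a spectator variable $z_j$ across this singlet returns it unchanged up to a scalar, namely $(qz_j-q^{-1}z_i)/(q-q^{-1})$ for $j<i$ and $(q^3 z_i-q^{-1}z_j)/(q-q^{-1})$ for $j>i+1$, the asymmetry reflecting that the right spectators see the partner variable $z_{i+1}=q^2 z_i$. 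Collecting these scalars over all spectators reproduces precisely the two products in (\ref{recursion-psi}); what remains satisfies the genuine $(N-2)$-variable exchange equations.

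By the triangularity of (\ref{triang-qKZ}) a polynomial solution of the exchange equations of minimal degree is fixed by a single component, so the object just constructed is proportional to $\Psi^\mu_{N-2}(\mathbf{z}_{\{\widehat{i,i+1}\}})$. It then remains to pin down the constant $(-q)^{f_N(i)}(q^2 z_i)^{\delta(\mu)}$. The power $(q^2 z_i)^{\delta(\mu)}$ is forced by degree counting: after the substitution $z_{i+1}=q^2 z_i$ the degree in $z_i$ doubles, and matching the top degree in $z_i$ of the two sides singles out $\delta(+)=1$ against $\delta(e)=\delta(-)=0$, consistently with the extra factor $\prod_i z_i$ in the $\Psi^+$ line of (\ref{factor-comp}). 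The scalar $(-q)^{f_N(i)}$ is then obtained by evaluating one maximally factorised component through (\ref{factor-comp}) at a convenient bond; its linear dependence on $i$, with zero at $i=\lfloor N/2\rfloor$, follows by extending to general $i$ via the rotation relation (\ref{qKZ2}) and the covariance of $\Phi^{(i)}$ under the shift $\sigma$, each step contributing a factor $-q$.

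The main obstacle is the fusion step across the frozen bond: verifying that transporting each spectator through the singlet $v_i$ yields exactly the advertised scalars, with the correct normalisation $q-q^{-1}$ and the $z_i$ versus $q^2 z_i$ dichotomy, and bookkeeping the accumulated sign so as to land on $(-q)^{f_N(i)}$ rather than a competing phase. Everything else is either the triangular uniqueness already granted after (\ref{triang-qKZ}) or a finite degree and normalisation check.
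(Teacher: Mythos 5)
Your proposal is correct and follows essentially the same route as the paper: showing the specialised vector factors through the singlet $v_i$ (your component check of (\ref{triang-qKZ}) is just the statement $\check R_{i,i+1}(q^{-2})=\tau^{-1}e_i$ written out), using a Yang--Baxter/fusion identity to pass spectator lines through the frozen pair and extract the scalar prefactors, and then invoking triangular uniqueness plus a maximally factorised component to fix the remaining constant. The only cosmetic difference is that the paper pins down $(-q)^{f_N(i)}(q^2z_i)^{\delta(\mu)}$ by directly comparing the components with most aligned spins starting at position $i+1$, rather than by your degree-counting and rotation argument.
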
 
\begin{proof}
If $z_{i+1}=q^{2}z_i$ then
$\check R_{i,i+1}(z_{i}/z_{i+1})$ becomes a projector proportional to
a generator of the Temperley-Lieb algebra 
\eq
\check R_{i,i+1}(q^{-2}) =
\tau^{-1}e_{i}%~~~~~~~\tau=-(q+q^{-1})
\en
Therefore, by specializing the qKZ equations to  $z_{i+1}=q^2z_i$,  we deduce
$$
\Psi^\mu_N(z_{i},z_{i+1}=q^2 z_i)=  \check
R_{i,i+1}(q^{-2})\Psi^\mu_N(q^2 z_{i},z_{i}) = \tau^{-1}e_i
\Psi^\mu_N(q^2 z_{i},z_{i}). 
$$
In particular 
$ \Psi^\mu_N(z_{i},z_{i+1}=q^2 z_i)$ lies in the image
of $\Phi_{N-2}^{(i)}$ and, by the injectivity of this map, there is a
unique  $\tilde \Psi^\mu_N(z_{i},{\bf z}_{\{\widehat{i,i+1}\}})$ such that 
\eq
\Psi^\mu_N(z_{i},z_{i+1}=q^2 z_i) = \Phi_{N-2}^{(i)} \tilde
\Psi^\mu_N(z_{i},{\bf z}_{\{\widehat{i,i+1}\}}).
\en
In order to determine the equations satisfied by $\tilde
\Psi^\mu_N(z_{i},{\bf z}_{\{\widehat{i,i+1}\}})$ we make use of the
following relations among $R$-matrices 
\eq
\begin{split}
e_i \hat R_{i-1}(z_{i+2}/z_i) \hat R_{i}(z_{i+2}/(q^2 z_i)) \hat
R_{i+1}(z_{i+2}/z_{i-1}) &\hat R_{i}(q^2z_{i}/z_{i-1}) \hat
R_{i-1}(z_{i}/z_{i}) e_i = \\[5pt]
\frac{(qz_{i+2}-q^{-1}z_i)(qz_{i}-q^{-1}z_{i-1})}{
  (qz_{i-1}-q^{-1}z_i)(qz_{i}-q^{-1}z_{i+2})}e_i 
&\hat R_{i-1,i+2}(z_{i+2},z_i) e_i. 
\end{split}
\en
Applying both sides to $\Psi^\mu_N(z_{i},z_{i+1}=q^2 z_i)$ and using that 
\eq
\hat R_{i-1,i+2}(z_{i+2},z_i) 
\Phi_i = \Phi_i \hat R_{i-1,i}(z_{i+2},z_i)  
\en
we get that $\tilde \Psi^\mu_N(z_{i},{\bf z}_{\{\widehat{i,i+1}\}})$ 
satisfies
\eq
\begin{array}{c}
$$\displaystyle{
\tilde \Psi^\mu_N(z_{i},{\bf z}_{\{\widehat{i,i+1}\}}) =}$$\\[5pt] 
$$\displaystyle{
\frac{(qz_{i+2}-q^{-1}z_i)(qz_{i}-q^{-1}z_{i-1})}{
  (qz_{i-1}-q^{-1}z_i)(qz_{i}-q^{-1}z_{i+2})} \hat
R_{i-1,i}(z_{i+2},z_i) \tilde \Psi^\mu_N(z_{i},{\bf z}_{\{\widehat{i,i+1}\}}). }$$
\end{array}
\en
and the vector
\eq
(-q)^{-f_N(i)}(q^2z_i)^{\delta(\mu)}\prod_{j=1}^{i-1}\frac{q-q^{-1}}{qz_{j}-q^{-1}z_i}
\prod_{j=i+1}^{N}\frac{q-q^{-1}}{q^{3}z_i-q^{-1}z_{j}} \tilde
\Psi^\mu_N(z_{i},{\bf z}_{\{\widehat{i,i+1}\}})   
\en
satisfies all the qKZ equations at size $N-2$. In order to check
that it coincides with $\Psi^\mu_{N-2}({\bf z}_{\{\widehat{i,i+1}\}})$
it is enough to check that the components with most aligned consecutive
spins starting from position $i+1$ coincide, which is indeed the case.  
\end{proof}

\section{(Pseudo)-EFP with spectral parameters}\label{inhom-section}

The formal definition of the Emptiness Formation Probability makes use
of the natural scalar product 
$(\cdot,\cdot)_N$ on $\cH_N$, induced by the scalar product
on $\bC_i^2$ where $\{e_i^\uparrow, e_i^\downarrow \}$ form an
orthonormal basis\footnote{In the following most of the time it will
  be clear from the context which Hilbert space we are considering and
  therefore we will omit the label
  $N$ in the scalar product $(\cdot,\cdot)_N$.}. The (pseudo)-EFPs
read  
\eq\label{hom-EFP-def}
\begin{split}
E^{\pm}_{2n+1}(k)&= \frac{\left(\Psi^\pm_{2n+1}({\bf 1}),\prod_{i=1}^k
  p^+_i \cdot\Psi^\pm_{2n+1}({\bf
    1})\right)}{\left(\Psi^\pm_{2n+1}({\bf 1}), 
  \Psi^\pm_{2n+1}({\bf 1})\right)}\\ 
E^{e}_{2n}(k)&= \frac{\left((\Psi^e_{2n}({\bf 1}))^*,\prod_{i=1}^k
  p^+_i \cdot\Psi^e_{2n}({\bf 1})\right)}{\left((\Psi^e_{2n}({\bf
    1}))^*, \Psi^e_{2n}({\bf 1} )\right)} \\ 
E^{\tilde e}_{2n}(k)&= \frac{\left(\Psi^e_{2n}({\bf 1}),\prod_{i=1}^k
  p^+_i \cdot\Psi^e_{2n}({\bf 1})\right)}{\left(\Psi^e_{2n}({\bf 1}),
  \Psi^e_{2n}({\bf 1})\right)}.  
\end{split}
\en

Our strategy to compute the EFPs is to consider an inhomogeneous
version of these 
quantities which is obtained, roughly speaking, by substituting in
eqs.(\ref{hom-EFP-def}) $\Psi^\mu_N({\bf 1})$ with $\Psi^\mu_N({\bf
  z})$, solution of the qKZ equations. 
We shall see that if the substitution is done in the proper way, then
the inhomogeneous EFPs turn out to be symmetric polynomials in the
spectral parameters and satisfy certain recursion relations
which completely characterize these functions among the polynomials of
the same degree in the variable $z_i$.

When defining the inhomogeneous EFP for $k$ aligned spins up, it is
convenient to extract the factor $\prod_{1\leq
i<j\leq k}(qy_i-q^{-1}y_j)$ from $\prod_{i=1}^k 
  p^+_i \Psi_N({\bf y}_{\{1,\dots ,k\}}; {\bf z}_{\{1,\dots ,N-k\}})$,
  and to introduce the vectors 
$\Psi_N(k;{\bf y}_{\{1,\dots ,k\}}; {\bf z}_{\{1,\dots ,N-k\}}) \in \cH_{N-k} $
\eq
\left(\bigotimes_{i=1}^k e_i^\uparrow \right)\otimes
\Psi_N(k;{\bf y}_{\{1,\dots ,k\}}; {\bf z}_{\{1,\dots ,N-k\}})
= \frac{\prod_{i=1}^k 
  p^+_i \Psi_N(y_1,\dots, y_k, 
z_{1},\dots, z_{N-k})}{\prod_{1\leq
i<j\leq k}(qy_i-q^{-1}y_j)} 
\en
Let us moreover introduce the operator
\eq
\cP_N({\bf z})=\prod_{i=1}^N (z_i p_i^+ +p_i^-).
\en 
that multiplies each component
of the vector by $z_i$ for a spin-up at position $i$. The last
ingredient we need is the $*$ operation which consists in substituting
$q$ with $q^{-1}$. Our definition of the inhomogeneous (and
unnormalized) EFP is the following 
\eq\label{def-inhom-EFP}
\begin{array}{c}
\cE^{\mu}_N(k;{\bf y}_{\{1,\dots ,2k \}};{\bf z}_{\{1,\dots ,N-k \}}) = \\[10pt]
\prod_{i=1}^{N-k}z_i^{-\delta(\mu)}(\cP_{N-k}({\bf
  z})(\Psi^{\mu}_{N}(k;q^{-6} {\bf y}_{\{k+1,\dots ,2k \}};   
{\bf z}))^*
,\Psi^\mu_N(k;{\bf y}_{\{1,\dots ,k\}}; {\bf z}) )_{N-k} 
\end{array}
\en
where $\delta(e)=\delta(-)=0$, while $\delta(+)=1$ and $N$ has the
parity corresponding to $\mu$. For the moment it is evident that for
$\mu=e$ or $\mu=-$, $\cE^{\mu}_N(k;{\bf y}; {\bf z})$ are
polynomials in their variables. Actually the polynomiality is true
also for the case $\mu=+$  as will be shown below. 
We define also the inhomogeneous version of the pseudo-EFP
\eq\label{def-inhom-pseudo}
\begin{array}{c}
\cE^{\tilde \mu}_{N}(k;{\bf y};{\bf z}) = \\[10pt]
\prod_{i=k+1}^{2k} y_i^{\lfloor \frac{N+1}{2}\rfloor -k}\prod_{i=1}^{N-k}
z_i^{\lfloor \frac{N+1}{2}\rfloor-1}  (\Psi^{\mu}_{N}(k;q^{-6}{\bf
  y}^{-1}_{\{k+1,\dots,2k \}} ; {\bf z}^{-1}) ,\Psi^\mu_{N}(k;{\bf
    y}_{\{1,\dots, k\}}; {\bf z}) )_{N-k}.  
\end{array}
\en

The choice  to multiply the variables $y_{k+1},\dots, y_{2k}$ by $q^{-6}
$ is motivated by the fact that in this way $\cE^{\mu}_N(k;{\bf y};
{\bf z})$ turns out to be symmetric under
exchange $y_i \leftrightarrow y_j$ for all $1\leq i,j\leq 2k$, as will
be shown at the end of Section \ref{rec-section-q-gen}. 
The polynomials (\ref{def-inhom-EFP}, \ref{def-inhom-pseudo}) have
other remarkable properties, but for the 
moment let us simply notice that for $q=e^{2\pi i/3}$ and
$z_i=y_i=1$ these functions reduce to the unnormalized version of the
EFPs as defined in eqs.(\ref{hom-EFP-def}).

\vskip .5cm

\noindent
Apparently, looking at
eqs.(\ref{def-inhom-EFP},\ref{def-inhom-pseudo}), it seems that we
have six different families of polynomials under
consideration. Actually in the odd size case we have 
\eq\label{equality-tilde}
\cE^{\mu}_{2n+1}(k;{\bf y}; {\bf z})=\cE^{\tilde \mu}_{2n+1}(k;{\bf
  y};{\bf z}).
\en
This follows from the fact that
\eq\label{rel-odd-Psi}
\begin{split}
\prod_{i=1}^{2n+1} z_i^{n+1}
\Psi^{+}_{2n+1}({\bf z}^{-1})= \cP_{2n+1}({\bf
  z})(\Psi^{+}_{2n+1}({\bf z}))^*\\ 
\prod_{i=1}^{2n+1} z_i^{n}
\Psi^{-}_{2n+1}({\bf z}^{-1})= \cP_{2n+1}({\bf
  z})(\Psi^{-}_{2n+1}({\bf z}))^*. 
\end{split}
\en
To prove eqs.(\ref{rel-odd-Psi}) we observe that the vector
$\Psi^{\mu}_{N}({\bf z}^{-1})$ satisfies the exchange equation
\eq\label{qKZinv}
\Psi^{\mu}_{N}(z_{i+1}^{-1},z_{i}^{-1}) =
\check{R}_{i,i+1}(z_{i}/z_{i+1}) \Psi^{\mu}_{N}(z_{i}^{-1},z_{i+1}^{-1}).
\en
The same equation equation holds also for the
vector $\cP_N({\bf z})(\Psi^{\mu}_{N}({\bf z}))^*$, i.e.
\eq\label{qKZstar}
\cP_N(z_{i+1},z_i)(\Psi^{\mu}_{N}(z_{i+1},z_i))^* =
\check{R}_{i,i+1}(z_{i}/z_{i+1}) \cP_N(z_i,z_{i+1})(\Psi^{\mu}_{N}(z_{i},z_{i+1}))^*. 
\en
This is a consequence of the following commutation relation
among the $\check R$-matrix and the operator $(p_i^+
+z_ip_i^-)(p_{i+1}^+ +z_{i+1}p_{i+1}^-)$ 
$$
\check{R}_{i,i+1}(z_{i}/z_{i+1})(z_{i}p_{i}^+
+p_{i}^-)(z_{i+1}p_{i+1}^+ +p_{i+1}^-) = (z_{i+1}p_i^+
+p_i^-)(z_{i}p_{i+1}^+ +p_{i+1}^-)\check{R}^*_{i,i+1}(z_{i+1}/z_{i}).   
$$
which implies
\eq\label{symm-R}
\check{R}_{i,i+1}(z_{i}/z_{i+1})\cP(z_{i},z_{i+1}) =
\cP(z_{i+1},z_{i})\check{R}^*_{i,i+1}(z_{i+1}/z_{i})    
\en
and eq.(\ref{qKZstar}). Therefore to conclude eqs.(\ref{rel-odd-Psi})
it is sufficient to check that they hold for the components with most aligned
spins.

We are left with only \emph{four} different inhomogeneous EFP and as a bonus we
have also shown that $\cE^{+}_{2n+1}(k;{\bf y}; {\bf z})$ is a
polynomial of its variables. 

\vskip .5cm
\noindent
{\bf Symmetry under $z_i\leftrightarrow z_j$}

\noindent
The inhomogeneous (pseudo)-EFP $\cE^{\mu}_N(k;{\bf y};{\bf z})$
is  obviously symmetric under exchange $y_i \leftrightarrow y_j$  for
$1\leq i,j\leq k$ and $k+1\leq i,j\leq 2k$. Using
eqs.(\ref{qKZinv},\ref{qKZstar}) it is easy to show that it is
symmetric also  under exchange $z_i\leftrightarrow z_j$.  Indeed 
\eq
\begin{array}{c}
\cE^{\mu}_N(k;{\bf y}; \dots,  z_i, z_{i+1},\dots) =\\[5pt] 
((\cP(z_i,z_{i+1})\Psi^{\mu}_{N}(k;z_i, z_{i+1} ))^* , 
\check{R}_{i,i+1}(z_i/z_{i+1})  
\Psi^\mu_N(k; z_{i+1}, z_{i})_N= 
\\[5pt]
(\check{R}_{i,i+1}(z_i/z_{i+1})(\cP(z_i,z_{i+1})\Psi^{\mu}_{N}(k;
z_i, z_{i+1} ))^* ,  \Psi^\mu_N(k;  z_{i+1}, z_{i})_N=   
\\[5pt]
\cP(z_{i+1},z_{i})\Psi^{\mu}_{N}(k; z_{i+1}, z_{i}))^*, 
\Psi^\mu_N(k; z_{i+1}, z_{i})_N=\\[5pt]
\cE^{\mu}_N(k;{\bf y}; \dots, z_{i+1}, z_{i},\dots)
\end{array}
\en
where in the third equality we have used the fact that the $\check
R$-matrix is symmetric while the fourth equality follows from
eq.(\ref{qKZstar}). 
The proof of the symmetry of the pseudo-EFP under $z_i\leftrightarrow
z_j$ is completely analogous. 

\vskip .5cm
\noindent
{\bf Factorized cases}

\noindent
Using eqs.(\ref{factor-comp}) we can provide the value of
$\cE^{\mu/\tilde \mu }_{N}(k;{\bf y}; {\bf z})$  corresponding to the
maximal number of consecutive aligned spins. They coincide for the true
and for the pseudo EFP and read
\eq\label{initial-value}
\begin{split}
\cE^{e/\tilde e}_{2k}(k;{\bf y};{\bf z}) 
&= \prod_{1\leq i<j\leq
  k}\frac{(qz_i-q^{-1}z_j)(qz_j-q^{-1}z_i)}{(q-q^{-1})^2}  \\
\cE^{+/\tilde +}_{2k+1}(k+1;{\bf y}; {\bf z}) &=\prod_{1\leq i<j\leq
  k}\frac{(qz_i-q^{-1}z_j)(qz_j-q^{-1}z_i)}{(q-q^{-1})^2}\prod_{i=1}^kz_i^2\\
\cE^{-/\tilde -}_{2k+1}(k;{\bf y};{\bf z}) &=\prod_{1\leq i<j\leq
  k+1}\frac{(qz_i-q^{-1}z_j)(qz_j-q^{-1}z_i)}{(q-q^{-1})^2}.
\end{split}
\en
We will see in the following that the first and the third of these
equations will provide the starting point of a recursion which 
will be worked out in the next section and which completely
characterize the inhomogeneous EFP.
 
\subsection{Recursion relation for the inhomogeneous
  EFP}\label{rec-section-q-gen} 

We begin this section by presenting some relations among the EFP at
different parities which are obtained by setting one of the spectral
parameters to zero or sending it to infinity
\begin{align}\label{special-0empt}
\cE^{+}_{2n+1}(k;{\bf y}; {\bf z} \setminus z_{2n+2})&=
(-1)^n(q-q^{-1})^{-2n}
\left(\prod_{i=1}^{2n+1-k}z_i^{-1}\right)\cE^{e}_{2n+2}(k;{\bf y};
 {\bf z})|_{z_{2n+2}=0}  \\[7pt]\label{special-0empt2}
\cE^{e}_{2n}(k;{\bf y}; {\bf z} \setminus z_{2n+1})&= 
(-1)^n(q-q^{-1})^{2n} \lim_{z_{2n+1}\rightarrow \infty} z_{2n+1}^{-2n}
\cE^{-}_{2n+1}(k;{\bf y};  {\bf z}).
\end{align}
The first of these equation follows from eqs.(\ref{zero-sp}) and by
noticing that
\eq
\cP({\bf z})|_{z_{2n+2}=0}=\cP({\bf z}\setminus z_{2n+2})p_{2n+2}^- . 
\en
For the second one we notice that, writing 
\eq
\begin{array}{c}
\cE^{-}_{2n+1}(k;{\bf y}; {\bf z}) = \\[5pt]
\left((\Psi^{-}_{2n+1}(k; {\bf z}))^* ,
\cP_{2n}({\bf z} \setminus z_{2n+1})z_{2n+1}p^+_{2n+1}  
\Psi^-_{2n+1}(k;  {\bf z}) \right)_{2n+1}~+\\[5pt]
\left((\Psi^{-}_{2n+1}(k;{\bf z}))^* ,
\cP_{2n}({\bf z} \setminus z_{2n+1})p^-_{2n+1}  
\Psi^-_{2n+1}(k;  {\bf z})\right)_{2n+1}
\end{array}
\en
the first term in the r.h.s. is a polynomial in $z_{2n+1}$ of degree
$2n-1$ while the second is of degree $2n$, therefore in the limit only
the second one survives and we can use again eqs.(\ref{zero-sp}).

\vskip .5cm
\noindent
{\bf Specialization $z_i=q^{\pm 2} z_j$}

\noindent

The inhomogeneous EFP satisfies a recursion relation inherited from
the the recursion relation among solutions of the qKZ equations,
eq.(\ref{recursion-psi})  
\eq\label{comp1-rec}
\begin{array}{c}
$$\displaystyle{
\cE^{\mu}_{N}(k;{\bf y}; {\bf z} )|_{z_{i+1}=q^{2}z_i}=}$$\\[5pt]
$$\displaystyle{\tau^{-1} 
( (\Psi^{\mu}_{N}(k; z_i, z_{i+1}=q^{-2}z_{i} ))^*
,\cP(z_{i+1}=q^2z_{i}) e_i
\Psi^\mu_N(k;  z_{i}, z_{i+1}=q^2z_{i}) =}$$\\[5pt]
$$\displaystyle{\tau^{-1} 
( e_i \cP(z_{i+1}=q^2z_{i})(\Psi^{\mu}_{N}(k; z_i,
z_{i+1}=q^{-2}z_{i} ))^* 
,\Psi^\mu_N(k;  z_{i}, z_{i+1}=q^2z_{i}).}$$
\end{array}
\en
A simple computation shows that 
$e_i (p_i^++ z_i p_i^-)(p_i^++ q^2 z_i p_i^-) =\frac{1}{\tau} e_i
(p_i^++ z_i p_i^-)(p_i^++ q^2 z_i p_i^-)e_i^*  $ which means
\eq
e_i \cP(z_{i+1}=q^2z_{i})=\frac{1}{\tau} e_i
\cP(z_{i+1}=q^2z_{i})e_i^* 
\en
and we can substitute it into the last line of eq.(\ref{comp1-rec})
obtaining 
\eq
\tau^{-2} ( e_i \cP(z_{i+1}=q^2z_{i})(e_i\Psi^{\mu}_{N}(k; z_i, z_{i+1}=q^{-2}z_{i} ))^*
,\Psi^\mu_N(k; z_{i}, z_{i+1}=q^2z_{i})_N
\en
Now use eq.(\ref{e_i-symm}) in order to exchange the variables $z_i$
and $z_{i+1}$ in the l.h.s. of the scalar product
\eq
\tau^{-2} ( e_i \cP(z_{i+1}=q^2z_{i})(e_i\Psi^{\mu}_{N}(k;z_{i+1}=q^{-2}z_{i}, z_i))^* 
,\Psi^\mu_N(k;  z_{i}, z_{i+1}=q^2z_{i})_N.
\en
Therefore we can apply to both sides of the scalar product the
recursion relation (\ref{recursion-psi}) and find
\eq\label{rec1}
\begin{array}{c}
$$\displaystyle{
\frac{\cE^{\mu}_{N}(k;{\bf y}; {\bf
    z})|_{z_{i+1}=q^{2}z_i}}{\cE^{\mu}_{N-2}(k;{\bf y};{\bf
    z}\setminus\{z_{i},  z_{i+1}\})}= 
}$$
\\[15pt]
$$\displaystyle{(-1)^k
(1+q^2)z_i\prod_{j=1}^{2k}\frac{qy_{j}-q^{-1}z_i}{q-q^{-1}}\prod_{\substack{1\leq j\leq 
    N-k\\j\neq
    i,i+1}}\frac{(qz_{j}-q^{-1}z_i)(q^{3}z_i-q^{-1}z_{j})}{-(q-q^{-1})^2}.}$$  
\end{array}
\en
The case of the pseudo EFP at even size is analogous
\eq
\begin{array}{c}
$$\displaystyle{
\cE^{\tilde e}_{2n}(k;{\bf y}; {\bf z})|_{z_{i+1}=q^{2}z_i}=}$$\\[5pt]  
$$\displaystyle{ (\Psi^{e}_{2n}(k; z_i^{-1},
    z_{i+1}^{-1}=q^{-2}z_{i}^{-1} ) 
,e_i \Psi^e_{2n}(k;  z_{i}, z_{i+1}=q^2z_{i})_{2n} =}$$\\[5pt]  
$$\displaystyle{ ( e_i \Psi^{e}_{2n}(k;
    z_{i+1}^{-1}=q^{-2}z_{i}^{-1}, z_i^{-1})  
,\Psi^e_{2n}(k; z_{i}, z_{i+1}=q^2z_{i})_{2n} }$$
\end{array}
\en
and again we can apply the recursion at the level of vectors  to both
sides of the scalar product finding
\eq\label{rec2}
\begin{array}{c}
$$\displaystyle{
\frac{\cE^{\tilde e}_{2n}(k;{\bf y};  {\bf
    z})|_{z_{i+1}=q^{2}z_i}}{\cE^{\tilde e}_{2n-2}(k;{\bf y}; {\bf
    z}\setminus\{z_{i}, z_{i+1}\})}=
}$$\\[15pt]
$$\displaystyle{(-1)^k
(1+q^2)\prod_{j=1}^{2k}\frac{qy_{j}-q^{-1}z_i}{q-q^{-1}}\prod_{\substack{1\leq j\leq 
    N-k\\j\neq
    i,i+1}}\frac{(qz_{j}-q^{-1}z_i)(q^{3}z_i-q^{-1}z_{j})}{-(q-q^{-1})^2}. }$$ 
\end{array}
\en
Let us look at $\cE^{e/\tilde e}_{2n}(k;{\bf y}; {\bf z})$
as polynomials in $z_1$. Their degrees are in both cases less than $2n-1$.
The recursion relations 
eqs.(\ref{rec1},\ref{rec2}) provide the value of
$\cE^{e/\tilde e}_{2n}(k;{\bf y};{\bf z})$ for $2(2n-k-1)$
distinct values of $z_1$ (i.e. for $z_1=q^\pm z_i$). Therefore, for
$n>k$,  by Lagrange interpolation these specializations  
determine uniquely $\cE^{e/\tilde e}_{2n}(k;{\bf y}; {\bf z})$
once $\cE^{e/\tilde e}_{2n-2}(k;{\bf y}; {\bf z})$ is known.

As a first consequence we can argue that $\cE^{e/\tilde e}_{2n}(k;{\bf
  y};  {\bf z})$ is symmetric under exchange $y_i \leftrightarrow
y_j$ for all $1\leq i,j\leq 2k$. Indeed for the case when $n=k$ we
have explicit expressions for $\cE^{e/\tilde e}_{2n=2k}(k;{\bf
  y}; {\bf z})$, given by eqs.(\ref{initial-value}), from which we can
read that they are even independent from ${\bf y}$. The
recursion relations (\ref{rec1},\ref{rec2}) are  symmetric under
exchange $y_i \leftrightarrow y_j$ and therefore by induction, if
$\cE^{e/\tilde e}_{2n-2}(k;{\bf y} ; {\bf z})$ is  symmetric, then
also $\cE^{e/\tilde e}_{2n}(k;{\bf y}; {\bf z})$ is symmetric.

A second important consequence is that any family of polynomials
labeled by $n$ and $k$, which satisfy the following conditions:
\begin{itemize}
\item[-] they are symmetric in the spectral parameters,
\item[-] the degree in each spectral parameter is less than $2n-1$,
\item[-] they coincide with  $\cE^{e/\tilde e}_{2k}(k;{\bf y}; {\bf z})$ for $n=k$,
\item[-] they satisfy the recursion relations
  eqs.(\ref{rec1},\ref{rec2}), 
\end{itemize}
must coincide with $\cE^{e/\tilde e}_{2n}(k;{\bf y}; {\bf z})$. 
This line of reasoning will be adopted in Section
\ref{combinatorial-pol} where we will provide a determinantal
representation of $\cE^{e/\tilde
  e}_{2n}(k;{\bf y}; {\bf z})$ at $q=e^{2\pi i/3}$.

The same arguments holds also for $\cE^{-}_{2n+1}(k;{\bf y}; {\bf
  z})$, because the degree is less than $2n+1$ and we have always 
  enough specialization in order to apply the Lagrange interpolation
  and reconstruct all the $\cE^{-}_{2n+1}(k;{\bf y}; {\bf z})$ starting
  from the initial conditions $\cE^{-}_{2k+1}(k;{\bf y};  {\bf
    z})$. The case of $\cE^{+}_{2n+1}(k;{\bf y}; {\bf z})$ is slightly
  different. Again the degree is bounded by 
  $2n+1$ and this allows to fix $\cE^{+}_{2n+1}(k;{\bf y}; {\bf z})$
  starting from $\cE^{+}_{2k+1}(k;{\bf y};  {\bf z})$, but the problem is that we do not have an
  explicit formula $\cE^{+}_{2k+1}(k;{\bf y}; {\bf z})$, being available only for $n=k-1$.
This apparent problem is bypassed using relations (\ref{special-0empt}). 

\section{Inhomogeneous EFP at $q^{2\pi i/3}$}\label{combinatorial-pol}

In order to introduce the expression of
$\cE^{\mu}_N(k;{\bf y};{\bf z})$ and of $\cE^{\tilde
  \mu}_N(k;{\bf y}; {\bf z})$  
which is best suited for taking the specialization
$z_i=y_\alpha=1$ we analyze first the case $k=0$, in which
there are no variables $y$. 
Let us introduce the Young tableaux
\eq
\lambda(m,r)=\{\lfloor \frac{r}{2} \rfloor,\lfloor \frac{r+1}{2}
\rfloor,\dots,\lfloor \frac{r+i-1}{2} \rfloor,\dots,\lfloor 
\frac{r+m-1}{2} \rfloor \}
\en 
Then we find that the inhomogeneous version of the squared norm or of the
sum of the square of the components is given in terms of the product of
two Schur polynomials
\eq\label{case:k=0}
\begin{array}{c}
\cE^{\mu}_{N}(0;{\bf z}) = 3^{-\lfloor \frac{N}{2}
  \rfloor\left(\lceil \frac{N}{2} \rceil-1 \right)} 
S_{\lambda(N,0)}(z_1,\dots,z_{N})S_{\lambda(N,1)}(z_1,\dots,z_{N})\\[5pt]
\cE^{\tilde e}_{2n}(0;{\bf z}) = 3^{-n(n-1)}
S_{\lambda(2n,0)}(z_1,\dots,z_{2n})^2
\end{array}
\en
The proof of eqs.(\ref{case:k=0}) is quite simple and follows the
pattern discussed at the end of the previous section. 
Eqs.(\ref{case:k=0}) are trivially true for $N=1,2$ (or $n=1$),
moreover their  r.h.s. 
are polynomials in $z_1$ of degree at most $2\lceil \frac{N}{2} \rceil
-1$. 
The Schur polynomials $S_{\lambda(m,r)}(z_1,\dots,z_m)$ satisfy a
recursion relation when one specializes $z_i=q^{\pm}
z_j$ (see for example Appendix B of \cite{biane})
\eq\label{recursion-schur}
S_{\lambda(m,r)}({\bf z})|_{z_i=q^{\pm } z_j}=
(-q^{\mp}z_j)^r\prod_{\substack{\ell=1\\
\ell \neq i,j}}^{m}(z_\ell-q^{\mp}z_j) ~ S_{\lambda(m-2,r))}({\bf z}
\setminus \{ z_i, z_j\}). 
\en
This implies that the r.h.s. of eqs.(\ref{case:k=0}) satisfy the
recursion relations eqs.(\ref{rec1},\ref{rec2}) and therefore
eqs.(\ref{case:k=0}) hold.

\vskip .5cm
\noindent
{\bf Generic value of $k$}

\noindent
The recursion relations eq.(\ref{recursion-schur}) for the Schur
functions $S_{\lambda(m,r)}$ suggests a possible representation also
in the case $k\neq 0$. For the sake of clarity let us focus for a
moment on $\cE^{e}_{2n}(k;{\bf y};{\bf z})$. It is easy to see that any
product of the kind $S_{\lambda(2n,0)}({\bf y}_I,{\bf z})S_{\lambda(2n,1)}({\bf y}_{I^c},{\bf z})$, with $I\subset \{1,\dots,
2k \}$ and $I^c=\{1,\dots,
2k \}\setminus I$, satisfies the recursion relations
(\ref{rec1}), but with a ``wrong'' initial condition. It is
reasonable to hope that an appropriate linear combination of terms
with different choices of $I$ could provide the right initial
condition and  hence $\cE^{e}_{2n}(k;{\bf y};{\bf z})$.     

In order to present how this idea actually works it is convenient to
introduce a bit of notation. 
Let $\tilde\rho,\tilde\sigma $ be strictly increasing infinite
sequences of non negative integers, then consider the following family
of matrices   
\eq
\cM^{(\tilde\rho,\tilde\sigma)}(r,s;{\bf y};{\bf z}) =
\left(
\begin{array}{cccccccc}
z_1^{\tilde\rho_1} & z_1^{\tilde\rho_2} & \dots & z_1^{\tilde\rho_{r+s}}& 0 & 0 & \dots
&0 \\ 
z_2^{\tilde\rho_1} & z_2^{\tilde\rho_2} & \dots & z_2^{\tilde\rho_{r+s}}& 0 & 0 & \dots
&0 \\ 
\vdots & \vdots & \ddots & \vdots & \vdots & \vdots & \ddots &\vdots\\
z_r^{\tilde\rho_1} & z_r^{\tilde\rho_2} & \dots & z_r^{\tilde\rho_{r+s}}& 0 & 0 & \dots
&0 \\ 
0 & 0 & \dots & 0 & z_1^{\tilde\sigma_1} & z_1^{\tilde\sigma_2} & \dots & z_1^{\tilde\sigma_{r+s}} \\ 
0 & 0 & \dots & 0 & z_2^{\tilde\sigma_1} & z_2^{\tilde\sigma_2} & \dots &
z_2^{\tilde\sigma_{r+s}} \\ 
\vdots & \vdots & \ddots & \vdots & \vdots & \vdots & \ddots &\vdots\\
0 & 0 & \dots & 0 & z_r^{\tilde\sigma_1} & z_r^{\tilde\sigma_2} & \dots &
z_r^{\sigma_{r+s}} \\
y_1^{\tilde\rho_1} & y_1^{\tilde\rho_2} & \dots & y_1^{\tilde\rho_{r+s}}& y_1^{\tilde\sigma_1}
& y_1^{\tilde\sigma_2} & \dots & y_1^{\tilde\sigma_{r+s}} \\ 
y_2^{\tilde\rho_1} & y_2^{\tilde\rho_2} & \dots & y_2^{\tilde\rho_{r+s}}& y_2^{\tilde\sigma_1}
& y_2^{\tilde\sigma_2} & \dots & y_2^{\tilde\sigma_{r+s}} \\ 
\vdots & \vdots & \ddots & \vdots & \vdots & \vdots & \ddots
&\vdots\\ 
y_{2s}^{\tilde\rho_1} & y_{2s}^{\tilde\rho_2} & \dots & y_{2s}^{\tilde\rho_{r+s}}& y_{2s}^{\tilde\sigma_1}
& y_{2s}^{\tilde\sigma_2} & \dots & y_{2s}^{\tilde\sigma_{r+s}} 
\end{array}
\right)
\en 
and let us define the following polynomials
\eq
\cS^{(\tilde\rho,\tilde\sigma)}(r,s;{\bf y};{\bf z})= \frac{\det
  \cM^{(\tilde\rho,\tilde\sigma)}(r,s;{\bf y};{\bf z})}{\prod_{1\leq i<j\leq
    r}(z_i-z_j)^2\prod_{1\leq i<j\leq
    2s}(y_i-y_j) \prod_{\substack{1\leq i \leq  r\\ 1\leq j \leq  2s}}(z_i-y_j)}.  
\en
The divisibility of $\det
  \cM^{(\tilde\rho,\tilde\sigma)}(r,s;{\bf y};{\bf z})$ by $\prod_{1\leq i<j\leq
    r}(z_i-z_j)^2 $  and by $\prod_{1\leq i<j\leq
    2s}(y_i-y_j)$ is immediate. If we set $z_i=y_j$ for some $i,j$
  then we subtract from the row corresponding to $y_j$ the two rows
  corresponding to $z_i$ getting a null row. This means that $\det
  \cM^{(\tilde\rho,\tilde\sigma)}(r,s;{\bf y};{\bf z})$ is also divisible
  by  \hbox{$(z_i-y_j)$}.
 
Using the Laplace expansion along the first $r+s$
columns we can write $\cS^{(\tilde\rho,\tilde\sigma)}(r,s;{\bf y};{\bf
  z})$ as a bilinear in Schur polynomials   
\eq\label{laplace-exp}
\cS^{(\tilde\rho,\tilde\sigma)}(r,s;{\bf y};{\bf z}) =
\sum_{\substack{I\subset \{1,\dots 2s \}\\|I|=s}}
(-1)^{\epsilon(I)}\frac{\displaystyle{ \prod_{i,j \in I \&
  i<j}(y_i-y_j)\prod_{i,j \in I^c \& i<j}(y_i-y_j)} }{\displaystyle{
    \prod_{1\leq i<j \leq 2s}(y_i-y_j)}}
S_{\rho(r+s)}({\bf z},{\bf y}_I)S_{\sigma(r+s)}({\bf z},{\bf y}_{I^c}),
\en
where $\rho(m)$ and $\sigma(m)$ are Young tableaux of length $m$, whose
entries are $\rho(m)_i= \tilde \rho_i -i+1$, $\sigma(m)_i= \tilde
\sigma_i -i+1$. 
In particular notice that when $s=0$ then
$\cS^{(\tilde\rho,\tilde\sigma)}(r,0;{\bf z})$ factorizes as product of
two Schur polynomials.
  
Now let us introduce the following family of integer sequences
\eq
\tilde \lambda_i(r)=\lfloor \frac{3i-3+r}{2}\rfloor ,~~~~~~
\begin{array}{c}
\tilde\lambda(0)=\{0,1,3,4,6,7,\dots \}\\
\tilde\lambda(1)=\{0,2,3,5,6,8,\dots \}\\
\tilde\lambda(2)=\{1,2,4,5,7,8,\dots \}\\
\cdots
\end{array}
\en
Then we claim that
\begin{align}\label{spectral1-3} 
\cE^{-}_{2n+1}(k;{\bf y};{\bf z}) &=3^{-n^2+k(k-1)/2}  
\cS^{(\tilde\lambda(0),\tilde\lambda(1))}(2n+1-k,k;{\bf y};{\bf z})\\[5pt]\label{spectral1-2} 
\cE^{+}_{2n+1}(k;{\bf y};{\bf z}) &=3^{-n^2+k(k-1)/2} \left(
\prod_{j=1}^{2n-k+1}z_j^{-1} \right)
\cS^{(\tilde\lambda(1),\tilde\lambda(2))}(2n+1-k,k;{\bf y};{\bf z})\\[5pt]\label{spectral1-1}
  \cE^{e}_{2n}(k;{\bf y};{\bf z}) &=3^{-n(n-1)+k(k-1)/2}
\cS^{(\tilde\lambda(0),\tilde\lambda(1))}(2n-k,k;{\bf y};{\bf z})\\[5pt]\label{spectral1-4} 
\cE^{\tilde e}_{2n}(k;{\bf y};{\bf z})
&=3^{-n(n-1)+k(k-1)/2} \left(\prod_{j=1}^{2n-k}z_j^{-1}\right)
\cS^{(\tilde\lambda(0),\tilde\lambda(2))}(2n-k,k;{\bf y};{\bf z})
\end{align}
These formulas reduce to eqs.(\ref{case:k=0}) for $k=0$. 
Using the relations among inhomogeneous EFP with different parities
eqs.(\ref{special-0empt},\ref{special-0empt2}) and the explicit form
of the matrices $\cM^{(\tilde\lambda(i),\tilde\lambda(j))}$ we see
easily that eq.(\ref{spectral1-2}) follows from
eq.(\ref{spectral1-1}), which in turn follows from
eq.(\ref{spectral1-3}). Therefore it remains to prove only 
eq.(\ref{spectral1-3}) and eq.(\ref{spectral1-4}).

The  r.h.s. of eq.(\ref{spectral1-3}) and eq.(\ref{spectral1-4}) are
polynomials in $z_i$ respectively of degree $2n+1$ and $2n-2$. Moreover
using  eq.(\ref{recursion-schur}) and the
form of $\cS^{(\tilde\lambda(r_1),\tilde\lambda(r_2))}(m,k;{\bf
  y};{\bf z})$ expressed by eq.(\ref{laplace-exp})   
we can easily obtain the recursion relation 
\eq\label{rec-cS}
\begin{array}{c}
$$\displaystyle{
\cS^{(\tilde\lambda(r_1),\tilde\lambda(r_2))}(m,k;{\bf y};{\bf z})|_{z_i=q^{\pm } z_j} = } 
$$\\[5pt]
$$\displaystyle{
(-q^{\mp}z_j)^{r_1+r_2}\prod_{\substack{\ell=1\\
\ell \neq i,j}}^{m}(z_\ell-q^{\mp}z_j)^2
  \prod_{\alpha=1}^{2k}(y_\alpha-q^{\mp
}z_j)\cS^{(\tilde\lambda(r_1),\tilde\lambda(r_2))}(m-2,k;{\bf y};{\bf z} \setminus \{z_i,z_j \}).}$$ 
\end{array} 
\en
In order to conclude, as explained at the end of Section
\ref{rec-section-q-gen}, it remains to show that
eqs.(\ref{spectral1-3},\ref{spectral1-4}) hold for $n=k$, i.e. we
need to prove that
\eq
\begin{array}{c}
\cS^{(\tilde\lambda(1),\tilde\lambda(2))}(k+1,k;{\bf y};{\bf z})=
\prod_{1\leq i<j \leq k+1}(z_i^2+z_iz_j+z_j^2)\\ 
\cS^{(\tilde\lambda(0),\tilde\lambda(2))}(k,k;{\bf y};{\bf z})=
\prod_{1\leq i<j \leq k}(z_i^2+z_iz_j+z_j^2) 
.
\end{array}
\en 
We proceed by factor exhaustion. A preliminary remark is that both
$\cS^{(\tilde\lambda(1),\tilde\lambda(2))}(2n-k+1,k;{\bf y};{\bf z})$
and $\cS^{(\tilde\lambda(0),\tilde\lambda(2))}(2n-k,k;{\bf y};{\bf z})$,
as polynomials in $y_i$ are of degree $n-k$ and in particular they
vanish as soon as $k>n$. Therefore using the recursion relation
(\ref{rec-cS}) we conclude that 
\eq\label{initial-to-prove}
\cS^{(\tilde\lambda(1),\tilde\lambda(2))}(k+1,k;{\bf y};{\bf z})|_{z_i=q^\pm z_j}=
\cS^{(\tilde\lambda(0),\tilde\lambda(2))}(2n-k,k;{\bf y};{\bf z})|_{z_i=q^\pm z_j}=0 .
\en
Since their degree as polynomials in $z_i$ is respectively $k$ and
$k-1$, this means that we have proven eqs.(\ref{initial-to-prove}) up
to a numerical constant. Such a constant will be fixed to be equal to
$1$ in the following section, where we shall compute explicitly the 
specialization of the inhomogeneous EFP for $z_i=t^{i-1}$ and $y_j =
t^{N-k+j-1}$.

\subsection{Homogeneous limit}\label{hom-sect}

In this section we arrive at last to the computation of the
homogeneous (pseudo) EFP using eqs.(\ref{spectral1-3}-\ref{spectral1-4}). We need only
to consider a last intermediate step by setting ${\bf z}= {\bf z}(t)$
and ${\bf y} = t^{N-k} {\bf y}(t)$ with 
$$z(t)_i=t^{i-1} ~~~~~
 \textrm{and} ~~~~~y(t)_j = t^{j-1}$$
The matrices $\cM^{(\tilde\lambda(r),\tilde\lambda(s))}(m,k;{\bf
  z}(t);t^m{\bf y}(t))$ with $r,s$ and $m$ as in
eqs.(\ref{spectral1-3}-\ref{spectral1-4}) have noticeable structure as
columns matrices. 
Let us look at a concrete example %$\cM^{(\tilde\lambda(0),\tilde\lambda(1))}(3,2;{\bf z}(t);t^3{\bf y}(t))$
\eq
\cM^{(\tilde\lambda(0),\tilde\lambda(1))}(3,2;{\bf z}(t);t^3{\bf y}(t)) =
\left(
\begin{array}{cccccccccc}
t^{0\cdot 0} & t^{1\cdot 0} & t^{3\cdot 0} & t^{4\cdot 0} & t^{6\cdot 0} & 0 & 0 & 0 & 0 & 0 \\ 
t^{0\cdot 1} & t^{1\cdot 1} & t^{3\cdot 1} & t^{4\cdot 1} & t^{6\cdot 1} & 0 & 0 & 0 & 0 & 0 \\ 
t^{0\cdot 2} & t^{1\cdot 2} & t^{3\cdot 2} & t^{4\cdot 2} & t^{6\cdot 2} & 0 & 0 & 0 & 0 & 0 \\ 
0 & 0 & 0 & 0 & 0 & t^{0\cdot 0} & t^{2\cdot 0} & t^{3\cdot 0} &
t^{5\cdot 0} & t^{6\cdot 0}\\  
0 & 0 & 0 & 0 & 0 & t^{0\cdot 1} & t^{2\cdot 1} & t^{3\cdot 1} &
t^{5\cdot 1} & t^{6\cdot 1}\\  
0 & 0 & 0 & 0 & 0 & t^{0\cdot 2} & t^{2\cdot 2} & t^{3\cdot 2} & t^{5\cdot 2} & t^{6\cdot 2}\\ 
t^{0\cdot 3} & t^{1\cdot 3} & t^{3\cdot 3} & t^{4\cdot 3} & t^{6\cdot 3} & t^{0\cdot 3} & t^{2\cdot 3} & t^{3\cdot 3} &
t^{5\cdot 3} & t^{6\cdot 3}\\  
t^{0\cdot 4} & t^{1\cdot 4} & t^{3\cdot 4} & t^{4\cdot 4} & t^{6\cdot 4} & t^{0\cdot 4} & t^{2\cdot 4} & t^{3\cdot 4} &
t^{5\cdot 4} & t^{6\cdot 4}\\
 t^{0\cdot 5} & t^{1\cdot 5} & t^{3\cdot 5} & t^{4\cdot 5} & t^{6\cdot 5} & t^{0\cdot 5} & t^{2\cdot 5} & t^{3\cdot 5} &
 t^{5\cdot 5} & t^{6\cdot 5}\\  
t^{0\cdot 6} & t^{1\cdot 6} & t^{3\cdot 6} & t^{4\cdot 6} & t^{6\cdot 6} & t^{0\cdot 6} & t^{2\cdot 6} & t^{3\cdot 6} &
t^{5\cdot 6} & t^{6\cdot 6}  
\end{array}
\right)
\en 
The entries of the $j$-th column (apart for the zeros) are consecutive
powers of some $v_j$, where $v_j$ is itself a power of $t$ which depends on the
column index $j$. In the precedent example  ${\bf v}=
\{1,t,t^3,t^4,t^6,1,t^2,t^3,t^5,t^6\}$. Moreover some $v_j$ appear
twice, once in the first half of the columns and once in the second half
(in the example $1,t^3,t^6$), while the remaining $v_j$ are of the form $a_1 \lambda^i$
in the first half of the columns and $a_2 \lambda^i$ in the
second half (in the example $\lambda =t^3, a_1=t, a_2=t^2$). 

By these considerations we are led to introduce the following families
of $2(\ell+r+s)\times 2(\ell+r+s)$ matrices $\cG^{(\ell,r,s)}({\bf
  v};\lambda,a_1,a_2)$, that are made of $6$ blocks of rectangular
matrices as follows 
\eq
\cG^{(\ell,r,s)}({\bf v};\lambda,a_1,a_2)=\left(
\begin{array}{cccc}
D_{\ell+r;\ell}^{(0)}({\bf v}) &D_{\ell+r;r+s}^{(0)}(a_1\vec \lambda)
& {\bf 0}& {\bf 0}\\[7pt]
{\bf 0} & {\bf 0} & D_{\ell+r;\ell}^{(0)}({\bf v}) &D_{\ell+r;r+s}^{(0)}(a_2
\vec \lambda)\\[7pt]
D_{2s;\ell}^{(\ell+r)}({\bf v}) &D_{2s;r+s}^{(\ell+r)}( a_1 \vec
  \lambda) & D_{2s;\ell}^{(\ell+r)}({\bf v})
  &D_{2s;r+s}^{(\ell+r)}(a_2\vec \lambda) 
\end{array}
\right),
\en
where ${\bf v}=\{v_1,\dots,v_\ell\}$, $a_i \vec
\lambda=\{a_i,a_i\lambda, \dots, a_i\lambda^{r+s-1}\}$ and  the blocks
consist of the following rectangular matrices 
\eq
D_{m;\ell}^{(j)}({\bf v})=\left( 
\begin{array}{cccc}
v_1^j & v_2^j & \dots & v^j_{\ell} \\
v_1^{j+1} & v_2^{j+1} & \dots & v^{j+1}_{\ell} \\
\vdots & \vdots & \ddots & \vdots \\
v_1^{j+m-1} & v_2^{j+m-1} & \dots & v^{j+m-1}_{\ell} 
\end{array}
\right).
\en
Apart for a trivial reordering of the columns we have
%, the matrices whose
%determinants appear in eqs.(\ref{spectral1-3}-\ref{spectral1-4}), for
%${\bf z}= {\bf z}(t)$ and ${\bf y} = t^{N-k}{\bf y}(t)$, are of the form  
\eq\begin{array}{c}
\cM^{(\tilde\lambda(0),\tilde\lambda(1))}(2n-k,k;{\bf z}(t);{\bf y}(t))=
\cG^{(n,n-k,k)}(\{t^{3i-3}\};t^3,t,t^2)\\[5pt]
\cM^{(\tilde\lambda(0),\tilde\lambda(2))}(2n-k,k;{\bf z}(t);{\bf y}(t))=
\cG^{(n,n-k,k)}(\{t^{3i-2}\};t^3,1,t^2)\\[5pt] 
\cM^{(\tilde\lambda(0),\tilde\lambda(1))}(2n+1-k,k;{\bf z}(t);{\bf y}(t))=
\cG^{(n+1,n-k,k)}(\{t^{3i-3}\};t^3,t,t^2)\\[5pt]
\cM^{(\tilde\lambda(1),\tilde\lambda(2))}(2n+1-k,k;{\bf z}(t);{\bf y}(t))=
\cG^{(n,n-k+1,k)}(\{t^{3i-2}\};t^3,1,t) 
\end{array}.
\en
Therefore, by calling 
\eq
\cE^{\mu}_{N}(k;t):=\cE^{\mu}_{N}(k;{\bf y}(t);{\bf z}(t))
\en
we have
\eq\label{t-spectral}
\begin{array}{c}
\cE^{e}_{2n}(k;t)= 3^{-n(n-1)+k(k-1)/2}\frac{\det \left(
  \cG^{(n,n-k,k)}(\{t^{3i-3}\};t^3,t,t^2) \right)}{\prod_{1\leq i< j
    \leq 2n-k}(t^{j-1}-t^{i-1}) 
  \prod_{1\leq i< j \leq 2n+k}(t^{j-1}-t^{i-1})}\\[8pt]
\cE^{\tilde e}_{2n}(k;t)= 3^{-n(n-1)+k(k-1)/2} \frac{\det \left(
  \cG^{(n,n-k,k)}(\{t^{3i-2}\};t^3,1,t^2) \right)}{\prod_{1\leq i< j
    \leq 2n-k}(t^{j-1}-t^{i-1}) 
  \prod_{1\leq i< j \leq 2n+k}(t^{j-1}-t^{i-1})}\\[8pt]
\cE^{-}_{2n+1}(k;t)= 3^{-n^2+k(k-1)/2} \frac{\det \left(
  \cG^{(n+1,n-k,k)}(\{t^{3i-3}\};t^3,t,t^2) \right)}{\prod_{1\leq i< j
    \leq 2n-k+1}(t^{j-1}-t^{i-1}) 
  \prod_{1\leq i< j \leq 2n+k+1}(t^{j-1}-t^{i-1})}\\[8pt]
\cE^{+}_{2n+1}(k;t)= 3^{-n^2+k(k-1)/2} \frac{\det \left(
  \cG^{(n,n-k+1,k)}(\{t^{3i-2)}\};t^3,1,t) \right)}{\prod_{1\leq i< j
    \leq 2n-k+1}(t^{j-1}-t^{i-1}) 
  \prod_{1\leq i< j \leq 2n+k+1}(t^{j-1}-t^{i-1})}
\end{array}
\en
The remarkable fact about the matrices $\cG^{(\ell,r,s)}({\bf
  v};\lambda,a_1,a_2)$ is that their determinants factorize
nicely. For $r\geq 0$ we have  
\eq\label{factor-glob0}
\det \cG^{(\ell,r,s)}({\bf v};\lambda,a_1,a_2)= \prod_{1\leq i,j \leq
  \ell}(v_i-v_j)^2 \prod_{\alpha=1,2}\prod_{\substack{1\leq i\leq \ell\\
1\leq j \leq r+s}}(v_i-\lambda^{j-1}a_\alpha) \det
\cG^{(0,r,s)}(\lambda,a_1,a_2), 
\en
with
\eq\label{fact-det0-0}
\det \cG^{(0,r,s)}(\lambda;a_1;a_2) =
(a_1a_2)^{\binom{r+s}{2}}\prod_{1\leq i,j\leq
  s}(\lambda^{j-1}a_1-\lambda^{i-1}a_2) ~\cD^{(r,s)}(\lambda),
\en
and
\eq\label{Dlambda-0}
\cD^{(r,s)}(\lambda)=
\frac{(-1)^{s(r+s)}\lambda^{s\left(\binom{r}{2}-\binom{r+s}{2}\right) }\prod_{1\leq
  i<j\leq r} (\lambda^{j-1}-\lambda^{i-1})\prod_{1\leq
  i<j\leq r+2s}(\lambda^{j-1}-\lambda^{i-1})}{\prod_{1\leq i,j\leq
  s}(\lambda^{j+s-1}-\lambda^{i-1})}.  
\en
These facts are proved in all detail in Appendix
\ref{fact-det}.

Before proceeding to the computation of the r.h.s. of
eqs.(\ref{t-spectral}), 
%the limit $t\rightarrow 1$ 
we come back for a moment to the argument we interrupted at the end of 
Section \ref{combinatorial-pol}. Eqs.(\ref{t-spectral}) have been
proven up to a constant independent of the difference $n-k$. To show
that the constant is $1$,
%conclude the proof and show that the constant is $1$ 
it is enough to
check the equations for $\cE^{\tilde e}_{2n}(k;t)$ and for
$\cE^{-}_{2n+1}(k;t)$ hold true in the case $n=k$. For the 
l.h.s. we use eqs.(\ref{initial-value}) with $q=e^{2\pi i/3}$ and
$z_i=t^{i-1}$, while for the r.h.s. we use
eqs.(\ref{factor-glob0}-\ref{Dlambda-0}). Rather than directly
comparing the two sides of the equations, it is more convenient to
compute the double ratio. A tedious but straightforward computation
using Proposition \ref{fact-det0-1} shows that for both sides we have
\eq
\begin{array}{c}
\left(\frac{\cE^{-}_{2k+3}(k+1;t)}{\cE^{-}_{2k+1}(k;t)}\right)/\left(
  \frac{\cE^{-}_{2k+1}(k;t)}{\cE^{-}_{2k-1}(k-1;t)}\right) =
3^{-1}t^{2k}\frac{t^{3(k+1)}-1}{t^{k+1}-1}  \\
\left(\frac{\cE^{\tilde e}_{2k+2}(k+1;t)}{\cE^{\tilde e}_{2k}(k;t)}\right)/\left(
  \frac{\cE^{\tilde e}_{2k}(k;t)}{\cE^{\tilde e}_{2k-2}(k-1;t)}\right) =
3^{-1}t^{2(k-1)}\frac{t^{3k}-1}{t^{k}-1}
\end{array}
\en 
which, combined with the direct verification for $n=k=1,2$, gives the
desired result.

\vskip .5cm
\noindent
{\bf Proofs of the conjectures}

\noindent
Taking the limit $t\rightarrow 1$ directly in eqs.(\ref{t-spectral})
is not easy. Instead we consider the ratios
$\frac{\cE^{\mu}_{N}(k-1,t)}{\cE^{\mu}_{N}(k,t)}$ which are easier to
compute, and from them recover
eqs.(\ref{recurE--},\ref{EFP++},\ref{EFPee*},\ref{EFPee}). Let us
explain the 
computation for the case
$\frac{\cE^{e}_{2n}(k-1,t)}{\cE^{e}_{2n}(k,t)}$, the other case being
dealt with in the same manner. Using the first of
eqs.(\ref{t-spectral}) we get
\eq\label{ratio-t}
\frac{\cE^{e}_{2n}(k-1,t)}{\cE^{e}_{2n}(k,t)} = 3^{1-k}
\frac{\prod_{i=1}^{2n+k-1}t^{2n+k-1}-t^{i-1}}{\prod_{i=1}^{2n-k}t^{2n-k}-t^{i-1}} \times\frac{\det \left(
  \cG^{(n,n-k+1,k-1)}(\{t^{3i-3}\};t^3,t,t^2) \right)}{\det \left(
  \cG^{(n,n-k,k)}(\{t^{3i-3}\};t^3,t,t^2) \right)}
\en
Then from eqs.(\ref{factor-glob0}-\ref{Dlambda-0}) we find that for
generic values of ${\bf v}$, $\lambda$ and $a_i$ the ratio $\frac{\det
  \cG^{(\ell,r,s)}({\bf v};\lambda,a_1,a_2)}{\det 
  \cG^{(\ell,r+1,s-1)}({\bf v};\lambda,a_1,a_2)}$ does not depend on
${\bf v}$ and is given by a very simple formula
\eq
\frac{\det \cG^{(\ell,r,s)}({\bf v};\lambda,a_1,a_2)}{\det
  \cG^{(\ell,r+1,s-1)}({\bf v};\lambda,a_1,a_2)}
=\lambda^{(s-1)(3s-2)/2}\prod_{j=-(s-1)}^{s-1}(\lambda^ja_1-a_2)
\frac{\cD^{(r,s)}(\lambda)}{\cD^{(r-1,s+1)}(\lambda)}  
\en
\eq
\frac{\cD^{(r,s)}(\lambda)}{\cD^{(r-1,s+1)}(\lambda)}  =(-1)^{r+s}
\frac{\prod_{i=1}^{r+2s-1}(\lambda^{i}
  -1)\prod_{i=1}^{s-1}(\lambda^i-1)^2}{\prod_{i=1}^r(\lambda^{i}-1)
  \prod_{i=1}^{2s-1}(\lambda^{i}-1)  \prod_{i=1}^{2s-2}(\lambda^{i}-1)}   
\en
At this point we make use of these equations in eq.(\ref{ratio-t}) and
substitute $\lambda=t^3$, $a_1=t$ and $a_2=t^2$. Repeating the same
steps with the proper modifications for the other EFPs we finally
obtain 
\eq\label{t-conj}
\begin{split}
\frac{\cE^{e}_{2n}(k-1;t)}{\cE^{e}_{2n}(k;t)}&=
t^{\alpha_e(n,k)}
\left(\frac{[3]_t}{3}\right)^{k-1}
\frac{[2n+k-1]_{t}![n-k]_{t^{3}}![2k-1]_{t^{3}}![2k-2]_{t^{3}}!}
     {[2n-k]_t![n+k-1]_{t^{3}}![k-1]_{t^{3}}! 
  [3k-2]_{t}!}     
\\[5pt]
\frac{\cE^{\tilde e}_{2n}(k-1;t)}{\cE^{\tilde e}_{2n}(k;t)}&=
t^{\alpha_{\tilde e}(n,k)}
(-q)\left(\frac{[3]_t}{3}\right)^{k-1}  
\frac{[2n+k-1]_{t}![n-k]_{t^{3}}![2k-1]_{t^{3}}![2k-2]_{t^{3}}!
}{[2n-k]_t![n+k-1]_{t^{3}}![k-1]_{t^{3}}![3k-3]_{t}![3k-1]_{t}}      
\\[5pt]
\frac{\cE^{-}_{2n+1}(k-1;t)}{\cE^{-}_{2n+1}(k;t)}&=
t^{\alpha_-(n,k)}\left(\frac{[3]_t}{3}\right)^{k-1}
\frac{[2n+k]_{t}![n-k]_{t^{3}}![2k-1]_{t^{3}}![2k-2]_{t^{3}}!
}{[2n-k+1]_t![n+k-1]_{t^{3}}![k-1]_{t^{3}}![3k-2]_{t}!}
\\[5pt]
\frac{\cE^{+}_{2n+1}(k-1;t)}{\cE^{+}_{2n+1}(k;t)}&=
t^{\alpha_+(n,k)}\left(\frac{[3]_t}{3}\right)^{k-1}
\frac{[2n+k]_{t}![n-k+1]_{t^{3}}![2k-1]_{t^{3}}![2k-2]_{t^{3}}!
}{[2n-k+1]_t![n+k]_{t^{3}}![k-1]_{t^{3}}![3k-2]_{t}!}
\end{split}
\en
where we have introduced the usual $t$-numbers and $t$-factorials
$$
[n]_t!=\prod_{i=1}^n[i]_t ~~~~~\textrm{and}~~~~~ [i]_t=\frac{t^i-1}{t-1}.
$$
The powers of $t$ in the r.h.s. of eqs.(\ref{t-conj}) do not concern
us because we are actually interested in the specialization $t=1$,
which at this point is immediate and reproduces the conjectured
formulas (\ref{recurE--},\ref{EFP++},\ref{EFPee*},\ref{EFPee}).  

\section*{Acknowledgments}

This work has been supported by the CNRS through a ``Chaire
d'excellence''. 

\appendix

\section{A determinant evaluation}\label{fact-det}

In this appendix we evaluate the determinants of a family of matrices
that appeared in the final step of the computation of the EFP in section
\ref{combinatorial-pol}.
The matrices we are interested are labeled by three indices $\ell, r,
s$ and are made of 
blocks of rectangular matrices
\eq
\cG^{(\ell,r,s)}({\bf v};\lambda,a_1,a_2)=\left(
\begin{array}{cccc}
D_{\ell+r;\ell}^{(0)}({\bf v}) &D_{\ell+r;r+s}^{(0)}(a_1\vec \lambda)
& {\bf 0}& {\bf 0}\\[7pt]
{\bf 0} & {\bf 0} & D_{\ell+r;\ell}^{(0)}({\bf v}) &D_{\ell+r;r+s}^{(0)}(a_2
\vec \lambda)\\[7pt]
D_{2s;\ell}^{(\ell+r)}({\bf v}) &D_{2s;r+s}^{(\ell+r)}( a_1 \vec
  \lambda) & D_{2s;\ell}^{(\ell+r)}({\bf v})
  &D_{2s;r+s}^{(\ell+r)}(a_2\vec \lambda) 
\end{array}
\right),
\en
where ${\bf v}=\{v_1,\dots,v_\ell\}$, $a_i \vec \lambda=\{a_i,a_i\lambda, \dots, a_i\lambda^{r+s-1}\}$ and each
block consists of the following rectangular matrices
\eq
D_{m;\ell}^{(j)}({\bf v})=\left( 
\begin{array}{cccc}
v_1^j & v_2^j & \dots & v^j_{\ell} \\
v_1^{j+1} & v_2^{j+1} & \dots & v^{j+1}_{\ell} \\
\vdots & \vdots & \ddots & \vdots \\
v_1^{j+m-1} & v_2^{j+m-1} & \dots & v^{j+m-1}_{\ell} 
\end{array}
\right)
\en
The matrix $\cG^{(\ell,r,s)}({\bf v};\lambda,a_1,a_2)$ has the total size
$2(\ell+r+s)\times 2(\ell+r+s)$. 
Here is an example 
\eq
\cG^{(2,1,1)}({\bf v};\lambda,a_1,a_2)=\left(
\begin{array}{cccccccc}
1 & 1 & 1 & 1 & 0 & 0 & 0 & 0 \\
v_1 & v_2  & a_1 & \lambda a_1 & 0 & 0 & 0 & 0 \\
v^2_1 & v^2_2  & a_1^2 & (\lambda a_1)^2 & 0 & 0 & 0 & 0 \\
0 & 0 & 0 & 0 & 1 & 1 & 1 & 1 \\
0 & 0 & 0 & 0 & v_1 & v_2  & a_2 & \lambda a_2 \\
0 & 0 & 0 & 0 & v^2_1 & v^2_2  & a_2^2 & (\lambda a_2)^2 \\
v^3_1 & v^3_2  & a_1^3 & (\lambda a_1)^3 & v^3_1 & v^3_2  & a_2^3 & (\lambda
a_2)^3 \\ 
v^4_1 & v^4_2  & a_1^4 & (\lambda a_1)^4 & v^4_1 & v^4_2  & a_2^4 & (\lambda
a_2)^4  
\end{array}
\right)
\en 
We are interested in the determinant of $\cG^{(\ell,r,s)}({\bf
  v};\lambda;a_1;a_2)$
or equivalently in the determinant 
of the matrix $\tilde \cG^{(\ell,r,s)}({\bf
  v};\lambda;a_1;a_2)$, obtained from $\cG^{(\ell,r,s)}( {\bf v};\lambda;a_1;a_2)$
through some simple row and columns
manipulations
\eq\label{def-gtilde}
\tilde \cG^{(\ell,r,s)}({\bf v};\lambda,a_1,a_2)=\left(
\begin{array}{cccc}
D_{\ell+r;\ell}^{(0)}({\bf v}) &D_{\ell+r;r+s}^{(0)}(a_1\vec \lambda)
& {\bf 0}& {\bf 0}\\[7pt]
{\bf 0} &D_{\ell+r+2s;r+s}^{(0)}( a_1 \vec
  \lambda) & D_{\ell+r+2s;\ell}^{(0)}({\bf v})
  &D_{\ell+r+2s;r+s}^{(0)}(a_2\vec \lambda) 
\end{array}
\right).
\en
From the defining eq.(\ref{def-gtilde}) we see that the first $\ell$ columns of 
$\tilde \cG^{(\ell,r,s)}({\bf v};\lambda,a_1,a_2)$ have rank $\ell+r$
and therefore $\det \tilde 
\cG^{(\ell,r,s)}({\bf v};\lambda,a_1,a_2)=0$ for $r<0$. For $r\geq 0$, 
$\det \cG^{(\ell,r,s)}({\bf v};\lambda,a_1,a_2)$ factorizes
nicely. This property is the content of the 
following three propositions.
\begin{proposition}
For $r\geq 0$ we have 
\eq
\det \cG^{(\ell,r,s)}({\bf v};\lambda,a_1,a_2)= \prod_{1\leq i,j \leq
  \ell}(v_i-v_j)^2 \prod_{\substack{\alpha=1,2\\1\leq i\leq \ell\\
1\leq j \leq r+s}}(v_i-\lambda^{j-1}a_\alpha) \det
\cG^{(0,r,s)}(\lambda,a_1,a_2) 
\en
\end{proposition}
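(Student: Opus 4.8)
The plan is to treat $\det\cG^{(\ell,r,s)}({\bf v};\lambda,a_1,a_2)$ as a polynomial in $v_1,\dots,v_\ell$ (with $\lambda,a_1,a_2$ generic) and to recover the claimed product by factor exhaustion, playing the two presentations $\cG$ and $\tilde\cG$ off against each other. First I would establish divisibility by the explicit linear factors working in the $\cG$ presentation, where each $v_i$ occurs symmetrically in the two column blocks (columns $i$ and $\ell+r+s+i$). If $v_i=v_j$ with $i\neq j$, then columns $i,j$ coincide and, simultaneously, columns $\ell+r+s+i,\ell+r+s+j$ coincide; setting $v_j=v_i+\varepsilon$ and subtracting the paired columns produces two columns that are each $O(\varepsilon)$, so $\det\cG^{(\ell,r,s)}=O(\varepsilon^2)$ and $\prod_{1\le i<j\le\ell}(v_i-v_j)^2$ divides it. If instead $v_i=\lambda^{j-1}a_1$, then column $i$ equals the $j$-th column of the $a_1\vec\lambda$ block (both are the same consecutive powers, supported on the same rows), giving the factor $(v_i-\lambda^{j-1}a_1)$; the analogous coincidence between the second $v_i$-column and the $a_2\vec\lambda$ block gives $(v_i-\lambda^{j-1}a_2)$. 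For generic $\lambda,a_1,a_2$ these linear factors are pairwise coprime and coprime to the $(v_i-v_j)^2$, so the full product $P:=\prod_{i<j}(v_i-v_j)^2\,\prod_{\alpha,i,j}(v_i-\lambda^{j-1}a_\alpha)$ divides $\det\cG^{(\ell,r,s)}$.

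The delicate point is to show there is no further ${\bf v}$-dependent factor, i.e. that $\deg_{v_i}\det\cG^{(\ell,r,s)}$ equals $\deg_{v_i}P=2(\ell+r+s-1)$. This is \emph{not} visible from $\cG$ directly, because there each $v_i$-column reaches power $\ell+r+2s-1$, so the naive bound overshoots and the would-be top terms must cancel. This cancellation is exactly what the reorganized matrix $\tilde\cG^{(\ell,r,s)}$ of eq.(\ref{def-gtilde}) makes manifest: there the two columns carrying $v_i$ lie in the block $D_{\ell+r;\ell}^{(0)}({\bf v})$ (powers at most $\ell+r-1$) and in the block $D_{\ell+r+2s;\ell}^{(0)}({\bf v})$ (powers at most $\ell+r+2s-1$), so the degree in $v_i$ is at most $(\ell+r-1)+(\ell+r+2s-1)=2(\ell+r+s-1)$. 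Since $\det\cG=\pm\det\tilde\cG$, the same bound holds for $\det\cG^{(\ell,r,s)}$, and it coincides with $\deg_{v_i}P$.

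Combining the two steps, the quotient $\det\cG^{(\ell,r,s)}/P$ has degree $0$ in each $v_i$, hence is a constant $C(\lambda,a_1,a_2)$ independent of ${\bf v}$; it is a genuine constant rather than a rational function because $\det\cG^{(\ell,r,s)}$ is symmetric in the $v_i$ (swapping $v_i\leftrightarrow v_j$ permutes two pairs of columns, an even permutation) and $P$ is symmetric too. It then remains only to identify $C=\det\cG^{(0,r,s)}(\lambda,a_1,a_2)$. Since $C$ is ${\bf v}$-independent I would pin it down by comparing the coefficient of the top ${\bf v}$-monomial on both sides: in the $\tilde\cG$ presentation the leading behaviour in ${\bf v}$ comes from the two $v$-columns saturating their maximal powers, whereupon these columns factor off as Vandermonde-type blocks whose leading coefficient cancels precisely the leading coefficient of $P$, leaving the minor on the surviving $a_1\vec\lambda$ and $a_2\vec\lambda$ columns, which is exactly $\det\cG^{(0,r,s)}$. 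Equivalently one can run an induction on $\ell$, peeling off $v_\ell$ by the same divisibility-plus-degree argument, with base case $\ell=0$ tautological. I expect the degree/cancellation issue, resolved by passing to $\tilde\cG$, to be the real crux of the argument, the identification of the overall constant being careful but routine bookkeeping.
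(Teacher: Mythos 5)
Your argument is correct and follows essentially the same route as the paper: divisibility by the explicit linear factors read off from the $\cG$ presentation, the degree bound $2(\ell+r+s-1)$ in each $v_i$ read off from $\tilde\cG$, and factor exhaustion leaving a ${\bf v}$-independent constant identified by peeling off one variable at a time. The only (cosmetic) difference is that you pin down the constant via the leading coefficient in $v_1$ (i.e.\ the $v_1\to\infty$ end), whereas the paper sets $v_1=0$ in $\tilde\cG$; both yield the same recursion $\cD^{(\ell,r,s)}({\bf v}\setminus v_1)=\det\tilde\cG^{(\ell-1,r,s)}({\bf v}\setminus v_1)$, and your hedge towards the explicit induction on $\ell$ is exactly the paper's proof.
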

\begin{proof}
The proof is done by induction on $\ell$. Let us look at $\det
\cG^{(\ell,r,s)}({\bf v};\lambda;a_1;a_2)$ as a polynomial in
$v_1$. Using  the matrix
$\tilde \cG^{(\ell,r,s)}({\bf v};\lambda;a_1;a_2)$ to compute the
determinant, we see that its degree is at most $2(\ell+r+s-1)$.  
The presence of all the factors $(v_1-v_j)^2$, $(v_1-\lambda^ja_1)$ and
$(v_1-\lambda^ja_2)$ is obvious using the matrix
$\cG^{(\ell,r,s)}({\bf v};\lambda,a_1,a_2)$. Since these factors exhaust the total degree it
only remain to determine the term $\cD^{(\ell,r,s)}({\bf v}\setminus
v_1;\lambda;a_1;a_2)$ constant in $v_1$ 
\eq\label{detZ-fact}
\det
\tilde \cG^{(\ell,r,s)}({\bf v};\lambda;a_1;a_2)=
\prod_{j=0}^{r+s-1}(v_1-\lambda^ja_1)(v_1-\lambda^ja_2)
\prod_{i=2}^\ell(v_1-v_i)^2  
\cD^{(\ell,r,s)}({\bf v}\setminus v_1;\lambda;a_1;a_2).
\en
We can just set $v_1=0$ in $\tilde
\cG^{(\ell,r,s)}({\bf v};\lambda;a_1;a_2)$ and compute its determinant. We get
immediately 
\eq\label{detZ-spe}
\det
\tilde \cG^{(\ell,r,s)}({\bf v};\lambda;a_1;a_2)|_{v_1=0}=
\lambda^{2\binom{r+s}{2}} a_1^{r+s}a_2^{r+s}  \prod_{i=2}^\ell
v_i^2  \det \tilde \cG^{(\ell-1,r,s)}({\bf v}\setminus v_1;\lambda;a_1;a_2).
\en
Comparing eq.(\ref{detZ-fact}) for $v_1=0$ and (\ref{detZ-spe})
we get
\eq
\cD^{(\ell,r,s)}({\bf v}\setminus v_1;\lambda;a_1;a_2)=\det \tilde
\cG^{(\ell-1,r,s)}({\bf v}\setminus v_1;\lambda;a_1;a_2) 
\en
which provides the recursion we were looking for.
\end{proof}
It remains only to establish the evaluation of $\det
\cG^{(0,r,s)}(\lambda;a_1;a_2)$, but let us deal first with the particular
case $r=0$, which is quite simple and was useful in Section
\ref{hom-sect}. When $r=0$ the matrix $\cG^{(0,0,s)}(\lambda;a_1;a_2)$ has the
form of a Vandermonde matrix and we find easily the following 
\begin{proposition}
\eq\label{fact-det0-1}
\det \cG^{(0,0,s)}(\lambda;a_1;a_2) = (a_1a_2)^{\binom{s}{2}}
\prod_{1\leq i<j \leq s}(\lambda^{i-1}-\lambda^{j-1})^2
\prod_{1\leq i,j\leq s}(\lambda^{i-1}a_1-\lambda^{j-1}a_2)
\en
\end{proposition}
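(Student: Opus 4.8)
The plan is to recognise that for $\ell=r=0$ the matrix $\cG^{(0,0,s)}(\lambda;a_1;a_2)$ collapses to a single Vandermonde matrix, after which the claimed factorisation is nothing but the Vandermonde product, regrouped. Indeed, when $\ell=r=0$ the first two block-rows of the general $\cG^{(\ell,r,s)}$ have $\ell+r=0$ rows, and the ${\bf v}$-blocks $D^{(\ell+r)}_{2s;\ell}({\bf v})$ have $\ell=0$ columns, so the whole matrix reduces to the last block-row with the ${\bf v}$-columns deleted, namely the $2s\times 2s$ matrix
\[
\cG^{(0,0,s)}(\lambda;a_1;a_2)=\bigl(\,D_{2s;s}^{(0)}(a_1\vec\lambda)\ \big|\ D_{2s;s}^{(0)}(a_2\vec\lambda)\,\bigr).
\]
By the definition of $D_{m;\ell}^{(j)}$ its $k$-th column lists the powers $x_k^0,x_k^1,\dots,x_k^{2s-1}$ of a single node $x_k$, where the $2s$ nodes are
\[
\{x_1,\dots,x_{2s}\}=\{a_1,a_1\lambda,\dots,a_1\lambda^{s-1},\,a_2,a_2\lambda,\dots,a_2\lambda^{s-1}\}.
\]
Hence $\cG^{(0,0,s)}$ is exactly a Vandermonde matrix in these nodes.

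First I would apply the Vandermonde identity $\det=\prod_{1\le p<q\le 2s}(x_q-x_p)$ and split the product according to whether the two nodes of a pair both lie in the $a_1$-family, both in the $a_2$-family, or one in each. The intra-family pairs contribute $a_1^{\binom s2}\prod_{1\le i<i'\le s}(\lambda^{i'-1}-\lambda^{i-1})$ together with the analogous factor for $a_2$; multiplying them yields $(a_1a_2)^{\binom s2}$ times the square $\prod_{1\le i<j\le s}(\lambda^{i-1}-\lambda^{j-1})^2$, the squaring absorbing any sign coming from the ordering of the $\lambda$-powers. The $s^2$ cross pairs contribute $\prod_{i,j=1}^{s}(a_2\lambda^{j-1}-a_1\lambda^{i-1})$, which, up to orientation, is precisely the advertised factor $\prod_{1\le i,j\le s}(\lambda^{i-1}a_1-\lambda^{j-1}a_2)$. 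Assembling the three pieces reproduces the right-hand side of the proposition.

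The only point that requires genuine care is the sign bookkeeping: one must track the orientation $x_q-x_p$ imposed by the chosen node ordering (the $a_1$-family preceding the $a_2$-family) when regrouping the ordered Vandermonde product, in particular when reconciling $\prod(\lambda^{i'-1}-\lambda^{i-1})$ with $\prod(\lambda^{i-1}-\lambda^{j-1})$ and when fixing the orientation of the $s^2$ cross factors $\lambda^{i-1}a_1-\lambda^{j-1}a_2$. Beyond this elementary accounting there is no real obstacle; the statement is a direct specialisation of the Vandermonde determinant, consistent with its role in the excerpt as the base case $r=0$ of the more elaborate evaluation of $\det\cG^{(0,r,s)}$.
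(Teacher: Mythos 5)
Your proposal is correct and follows essentially the same route as the paper, which simply observes that $\cG^{(0,0,s)}$ is the Vandermonde matrix on the $2s$ nodes $\{a_1\lambda^{i-1}\}\cup\{a_2\lambda^{j-1}\}$ and reads off the factorisation by regrouping the Vandermonde product; your write-up is in fact more explicit than the paper's one-line argument. The sign bookkeeping you flag is genuinely the only delicate point (the literal ordered Vandermonde product gives the cross factors as $\prod_{i,j}(a_2\lambda^{j-1}-a_1\lambda^{i-1})$, i.e.\ $(-1)^{s}$ times the orientation printed in the proposition), but this looseness is inherited from the paper's statement rather than introduced by your argument.
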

The case when $r$ is generic is a bit more subtle and is dealt with in 
the following
\begin{proposition}
\eq\label{fact-det0}
\det \cG^{(0,r,s)}(\lambda;a_1;a_2) =
(a_1a_2)^{\binom{r+s}{2}}\prod_{1\leq i,j\leq
  s}(\lambda^{j-1}a_1-\lambda^{i-1}a_2) ~\cD^{(r,s)}(\lambda)
\en
with
\eq\label{Dlambda}
\cD^{(r,s)}(\lambda)=
\frac{(-1)^{s(r+s)}\lambda^{s\left(\binom{r}{2}-\binom{r+s}{2}\right) }\prod_{1\leq
  i<j\leq r} (\lambda^{j-1}-\lambda^{i-1})
\prod_{1\leq
  i<j\leq r+2s}(\lambda^{j-1}-\lambda^{i-1})}{\prod_{1\leq i,j\leq
  s}(\lambda^{j+s-1}-\lambda^{i-1})}  
\en
\end{proposition}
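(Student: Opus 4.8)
The plan is to treat $\det\cG^{(0,r,s)}$ as a polynomial in $a_1$ and $a_2$ (with $\lambda$ a free parameter) and to pin down all of its factors by a vanishing/degree argument, leaving only a pure function of $\lambda$ to be evaluated. First I would record the structure obtained after deleting the empty $\ell=0$ column blocks: the matrix is $2(r+s)\times 2(r+s)$, its columns split into an $a_1$-group with nodes $b_j=a_1\lambda^{j-1}$ and an $a_2$-group with nodes $c_i=a_2\lambda^{i-1}$, and its rows into three groups — $r$ rows carrying the exponents $0,\dots,r-1$ on the $a_1$-columns only, $r$ rows carrying $0,\dots,r-1$ on the $a_2$-columns only, and $2s$ ``high'' rows carrying $r,\dots,r+2s-1$ on all columns. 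Because of the two zero blocks, a Laplace expansion along the $a_1$-columns forces the first $r$ rows into the $a_1$-group and the next $r$ rows into the $a_2$-group, so that only the $2s$ high rows are free and must be split $s$–$s$. This yields
\[
\det\cG^{(0,r,s)}=\sum_{T}\epsilon_T\,\det M_{a_1}(T)\,\det M_{a_2}(T^c),
\]
the sum over $s$-subsets $T$ of the high rows, where each minor is a generalized Vandermonde in geometric-progression nodes and hence equals an explicit monomial in $a_1$ (resp. $a_2$) times a function of $\lambda$.

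Next I would carry out factor exhaustion in $a_1$. From the monomial structure the total degree in $a_1$ is $\binom{r+s}{2}+s^2$, and the minimal exponent is $\binom{r+s}{2}$ (attained by sending the high $a_1$-rows to the smallest available exponents), so $a_1^{\binom{r+s}{2}}$ divides the determinant; the same holds for $a_2$, giving the factor $(a_1a_2)^{\binom{r+s}{2}}$. The remaining $s^2$ roots come from the mixed loci: when $a_1=a_2\lambda^{d}$ the nodes satisfy $b_{j}=c_{j+d}$ simultaneously for every admissible $j$, and I would argue that this forces the determinant to vanish to order exactly $s-|d|$, matching the multiplicity with which $\prod_{1\le i,j\le s}(\lambda^{j-1}a_1-\lambda^{i-1}a_2)$ carries the root $a_1=a_2\lambda^{d}$. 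Since $\binom{r+s}{2}+\sum_{d=-(s-1)}^{s-1}(s-|d|)=\binom{r+s}{2}+s^2$ equals the degree, these factors exhaust the $a_1$-dependence and the complementary factor is independent of $a_1$; running the symmetric argument in $a_2$ (or invoking homogeneity) shows it is a function of $\lambda$ alone, which I identify with $\cD^{(r,s)}(\lambda)$.

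Finally, to evaluate $\cD^{(r,s)}(\lambda)$ I would read off one extremal coefficient. The top power of $a_1$ (equivalently the bottom power of $a_2$) is produced by the single term $T^\ast$ assigning the largest $s$ high exponents to the $a_1$-columns; then $M_{a_2}(T^{\ast c})$ is an ordinary Vandermonde in $c_1,\dots,c_{r+s}$ and $M_{a_1}(T^\ast)$ is a single generalized Vandermonde whose value is a principal specialization of a Schur polynomial at $(1,\lambda,\dots,\lambda^{r+s-1})$. Dividing these explicit products by the factors already found and simplifying with the standard $\lambda$-Vandermonde / $q$-number identities should reproduce eq.(\ref{Dlambda}). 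As a cross-check and an alternative, one may instead induct on $s$ with base case $s=0$, where $\cG^{(0,r,0)}$ is block diagonal and its determinant is the product of two geometric Vandermondes, matching $\cD^{(r,0)}(\lambda)=\big(\prod_{1\le i<j\le r}(\lambda^{j-1}-\lambda^{i-1})\big)^2$.

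The hard part will be the mixed-locus count above: proving that the order of vanishing at each $a_1=a_2\lambda^{d}$ is exactly $s-|d|$ rather than merely positive, since the coincidences $b_j=c_{j+d}$ involve columns sitting in different exponent blocks, so the rank drop is not immediate and must be quantified (for instance by exhibiting $s-|d|$ independent column relations and controlling the order through derivatives in $a_1$). The second delicate point is the bookkeeping in the last step — tracking the sign $(-1)^{s(r+s)}$ and the exact power $\lambda^{s(\binom{r}{2}-\binom{r+s}{2})}$ through the Schur principal specialization — which is routine but error-prone.
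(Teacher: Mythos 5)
Your overall strategy --- the Laplace expansion forced by the zero blocks, the degree count $\binom{r+s}{2}+s^2$ in $a_1$, extraction of $(a_1a_2)^{\binom{r+s}{2}}$, and factor exhaustion at the loci $a_1=\lambda^d a_2$ --- coincides with the paper's proof. Your way of pinning down $\cD^{(r,s)}(\lambda)$ at the end (reading off the leading coefficient in $a_1$, which is one ordinary Vandermonde times one generalized Vandermonde given by a principal specialization of a Schur polynomial) is a legitimate alternative to the paper's choice, which instead specializes $a_1=\lambda^s a_2$ so that column subtractions make the matrix block triangular and the determinant becomes a product of two geometric Vandermondes; both routes involve comparable bookkeeping.

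The genuine gap is the step you yourself flag and then leave open: you never establish the vanishing at $a_1=\lambda^d a_2$, and without it the factor $\prod_{1\leq i,j\leq s}(\lambda^{j-1}a_1-\lambda^{i-1}a_2)$ is not justified. Two remarks close it. First, you only need the order of vanishing to be \emph{at least} $s-|d|$: since $\binom{r+s}{2}+\sum_{d=-(s-1)}^{s-1}(s-|d|)$ already equals the total $a_1$-degree, exactness is automatic, so the difficulty you worry about (proving the order is not larger) never arises. Second, the lower bound follows from a concrete column manipulation, which is how the paper proceeds: for $d\geq 0$, setting $a_1=\lambda^d a_2$ makes the node of the $i$-th $a_1$-column equal to the node of the $(i+d)$-th $a_2$-column, so subtracting the $(r+s+d+i)$-th column from the $i$-th column for $1\leq i\leq r+s-d$ annihilates the bottom $(r+2s)$-row block of those $r+s-d$ columns, leaving them supported on only the first $r$ rows. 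Those columns then have rank at most $r$, the full matrix has corank at least $s-d$, and the determinant vanishes to order at least $s-d$ in $a_1$; the case $d<0$ is symmetric. With that inserted (and the sign/power bookkeeping in your final evaluation carried out), your argument is complete.
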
 
\begin{proof}
The proof is done by factor identification.
For $\ell=0$ the matrix $\tilde \cG^{(0,r,s)}(\lambda;a_1;a_2)$ reads as follows
\eq
\left(
\begin{array}{cc}
D_{r;r+s}^{(0)}(a_1\vec \lambda)
& {\bf 0}\\[7pt]
D_{r+2s;r+s}^{(0)}( a_1 \vec
  \lambda)   &D_{r+2s;r+s}^{(0)}(a_2\vec \lambda) 
\end{array}
\right).
\en
From the Laplace expansion of $\det\tilde \cG^{(0,r,s)}(\lambda;a_1;a_2)$ with respect
to the first $r+s$ columns we can easily deduce that all the
determinants of the minors coming from the first $r+s$ columns are
divided by $a_1^{\binom{r+s}{2}}$, while all the determinants of the
minors coming from the last $r+s$ columns are divided by
$a_2^{\binom{r+s}{2}}$. In this way we have obtained the factor
$(a_1a_2)^{\binom{r+s}{2}}$ 
in eq.(\ref{fact-det0}). In order to determine the total degree of
$\det \tilde \cG^{(0,r,s)}(\lambda;a_1;a_2)$ as a function of $a_1$ and $a_2$ we notice
that the only non vanishing contributions in the Laplace expansion of 
$\det \tilde \cG^{(0,r,s)}(\lambda;a_1;a_2)$ with respect to the first $r+s$
columns come from the minor in the first $r+s$ columns containing the first $r$
rows and $s$ among the last $2s$ rows, while the minor in the last
$r+s$ columns containing the rows from $r+1$ to $2r$ 
and $s$ among the last $2s$ rows. Therefore the term of highest degree in
$a_1$ corresponds to the minor containing the last $s$ rows, and its
degree is $\binom{r+s}{2}+s^2$. 

Now we show that as a function of $a_1$, $\det \tilde
\cG^{(0,r,s)}(\lambda;a_1;a_2)$ 
has zeros of order $s-|j|$ at $a_1=\lambda^j a_2$ for $0\leq |j|\leq s-1 $. 
Let for simplicity $j\geq 0$ (the case $j<0$ being dealt with in the
same manner). 
If we set $a_1=\lambda^j a_2 $ then the first $r+s-j$ columns of $D_{r+2s;r+s}^{(0)}( \lambda^j a_2 \vec
  \lambda)$  are equal to the last $r+s-j$ columns of
  $D_{r+2s;r+s}^{(0)}(a_2 \vec \lambda)$. Therefore if in $\tilde
\cG^{(0,r,s)}(\lambda;a_1=\lambda^j a_2;a_2)$ we subtract 
 the $r+s+j+i$-th column from the
$i$-th column, for $1
\leq i \leq r+s-j$, we obtain a matrix whose first $r+s-j$ columns have rank
$r$. This means that the determinant of the original matrix has a
zero of order at least $s-j$ at $a_1=\lambda^j a_2 $. 

Since we have determined all the zeros of $\det \tilde \cG^{(0,r,s)}(\lambda;a_1;a_2)$
as a function of $a_1$ and $a_2$ we have established eq.(\ref{fact-det0})
up to the unknown factor  $\cD^{(r,s)}(\lambda)$ which does not depend
on $a_1$ and $a_2$. In order to determine such a factor, we specialize
$a_1=\lambda^s a_2$. We find that the first $r$ columns of $D_{r+2s;r+s}^{(0)}( \lambda^s a_2 \vec
  \lambda)$  are equal to the last $r$ columns of
  $D_{r+2s;r+s}^{(0)}(a_2 \vec \lambda)$. Hence by subtracting 
 the $r+2s+i$-th column of the matrix $\tilde
\cG^{(0,r,s)}(\lambda;a_1;a_2)$ from its $i$-th column, for $1 \leq i
\leq r$, we obtain the following matrix 
\eq
\left(
\begin{array}{cc}
D_{r;r}^{(0)}(\lambda^sa_2\vec \lambda)
& {\bf *}\\[7pt]
{\bf 0} & D_{r+2s;r+2s}^{(0)}(a_2
\{\lambda^{r+s},\lambda^{r+s+1},\dots,\lambda^{r+2s-1},1,
\lambda,\dots,\lambda^{r+s-1}\})
\end{array}
\right)
\en
whose determinant is simply the product of the determinants of the two
diagonal blocks $\det D_{r;r}^{(0)}$ and $\det D_{r+2s;r+2s}^{(0)}$
\eq\label{detLambda0}
\begin{array}{c}
\det \cG^{(0,r,s)}(\lambda;a_1=\lambda^s a_2;a_2) =\det
D_{r;r}^{(0)} \det D_{r+2s;r+2s}^{(0)} =\\[5pt]
(-1)^{s(r+s)}\lambda^{s\binom{r}{2}}a_2^{\binom{r}{2}+\binom{r+2s}{2}}\prod_{1\leq
  i<j\leq r} (\lambda^{j-1}-\lambda^{i-1})
\prod_{1\leq
  i<j\leq r+2s} (\lambda^{j-1}-\lambda^{i-1})
\end{array}
\en
By comparing eq.(\ref{detLambda0}) with eq.(\ref{fact-det0}) specialized at
$a_1=\lambda^sa_2$ we obtain eq.(\ref{Dlambda}).
\end{proof}

\begin{corollary}
\eq
\frac{\det \cG^{(\ell,r,s)}({\bf v};\lambda,a_1,a_2)}{\det
  \cG^{(\ell,r+1,s-1)}({\bf v};\lambda,a_1,a_2)}
=\lambda^{(s-1)(3s-2)/2}\prod_{j=-(s-1)}^{s-1}(\lambda^ja_1-a_2)
\frac{\cD^{(r,s)}(\lambda)}{\cD^{(r-1,s+1)}(\lambda)}  
\en
\eq
\frac{\cD^{(r,s)}(\lambda)}{\cD^{(r-1,s+1)}(\lambda)}  =(-1)^{r+s}
\frac{\prod_{i=1}^{r+2s-1}(\lambda^{i}
  -1)\prod_{i=1}^{s-1}(\lambda^i-1)^2}{\prod_{i=1}^r(\lambda^{i}-1)
  \prod_{i=1}^{2s-1}(\lambda^{i}-1)  \prod_{i=1}^{2s-2}(\lambda^{i}-1)}   
\en
\end{corollary}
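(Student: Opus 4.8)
The plan is to read off both identities from the factorizations already established in this appendix, namely (\ref{factor-glob0}), (\ref{fact-det0}) and (\ref{Dlambda}). The key first observation is that the substitution $(r,s)\mapsto(r+1,s-1)$ leaves $r+s$ unchanged. Hence in (\ref{factor-glob0}) the entire prefactor multiplying $\det\cG^{(0,r,s)}(\lambda;a_1;a_2)$ — which involves ${\bf v}$ and the shifts $\lambda^{j-1}a_\alpha$ only through the product over $1\le j\le r+s$ — is literally the same for $\cG^{(\ell,r,s)}$ and for $\cG^{(\ell,r+1,s-1)}$. It therefore cancels in the ratio, giving
\eq
\frac{\det\cG^{(\ell,r,s)}({\bf v};\lambda,a_1,a_2)}{\det\cG^{(\ell,r+1,s-1)}({\bf v};\lambda,a_1,a_2)}=\frac{\det\cG^{(0,r,s)}(\lambda;a_1;a_2)}{\det\cG^{(0,r+1,s-1)}(\lambda;a_1;a_2)}.
\en
In particular the ratio is independent of $\ell$ and of ${\bf v}$, which is the content that makes the statement meaningful; from here on I work at $\ell=0$.

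Next I would substitute the closed form (\ref{fact-det0}) in numerator and denominator. Since $(r+1)+(s-1)=r+s$, the powers $(a_1a_2)^{\binom{r+s}{2}}$ coincide and cancel, leaving the ratio of the two $\cD$-factors attached to numerator and denominator, $\cD^{(r,s)}(\lambda)/\cD^{(r+1,s-1)}(\lambda)$, times the ratio of the $a$-products. The latter is
\eq
\frac{\prod_{1\le i,j\le s}(\lambda^{j-1}a_1-\lambda^{i-1}a_2)}{\prod_{1\le i,j\le s-1}(\lambda^{j-1}a_1-\lambda^{i-1}a_2)}=\prod_{i=1}^{s}(\lambda^{s-1}a_1-\lambda^{i-1}a_2)\prod_{j=1}^{s-1}(\lambda^{j-1}a_1-\lambda^{s-1}a_2),
\en
i.e.\ exactly the factors in which the index $s$ occurs in one of the two slots. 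Pulling out of each factor the power of $\lambda$ that normalises it to the shape $\lambda^m a_1-a_2$ rewrites this as $\lambda^{\binom{s}{2}+(s-1)^2}\prod_{m=-(s-1)}^{s-1}(\lambda^m a_1-a_2)$, and since $\binom{s}{2}+(s-1)^2=(s-1)(3s-2)/2$ this is precisely the $a$-dependent prefactor in the first identity. This establishes the first line once the $\cD$-ratio is identified with the second line.

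It then remains to evaluate $\cD^{(r,s)}/\cD^{(r+1,s-1)}$ from the explicit expression (\ref{Dlambda}). Here I would factor $\lambda^{j-1}-\lambda^{i-1}=\lambda^{i-1}(\lambda^{j-i}-1)$ throughout, so that each Vandermonde-type product $\prod_{1\le i<j\le m}(\lambda^{j-1}-\lambda^{i-1})$ becomes a power of $\lambda$ times $\prod_{d=1}^{m-1}(\lambda^d-1)^{m-d}$, while the denominator product $\prod_{1\le i,j\le s}(\lambda^{j+s-1}-\lambda^{i-1})$ becomes a power of $\lambda$ times $\prod_{d=1}^{2s-1}(\lambda^d-1)^{s-|d-s|}$. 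Forming the ratio at indices $(r,s)$ and $(r+1,s-1)$, the two Vandermonde products contribute a net $\prod_{d=r+1}^{r+2s-1}(\lambda^d-1)$, while the two denominator products contribute a net $\bigl[\prod_{d=s}^{2s-2}(\lambda^d-1)^2(\lambda^{2s-1}-1)\bigr]^{-1}$; regrouping these two reproduces exactly the right-hand side of the second identity, the overall sign collecting to $(-1)^{r+s}$. The main obstacle — and the only genuinely laborious part — is the accompanying power-of-$\lambda$ bookkeeping: one must check that the explicit prefactor $\lambda^{s(\binom{r}{2}-\binom{r+s}{2})}$ of (\ref{Dlambda}), together with all the $\lambda$-powers generated by the factorisations above at both index sets, cancels completely, so that no net power of $\lambda$ survives in the final expression. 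Everything else is immediate.
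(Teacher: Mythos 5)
Your derivation is correct and is the natural (indeed, the only reasonable) route: the paper states this as a corollary without proof, and your plan — cancel the $\ell$- and ${\bf v}$-dependent prefactor of eq.~(\ref{factor-glob0}) using the invariance of $r+s$ under $(r,s)\mapsto(r+1,s-1)$, then substitute eq.~(\ref{fact-det0}) and extract the boundary factors of the $a$-product and the ratio of $\cD$'s from eq.~(\ref{Dlambda}) — is exactly what is intended. Your exponent count $\binom{s}{2}+(s-1)^2=(s-1)(3s-2)/2$ is right, and the deferred $\lambda$-power bookkeeping does close: the net exponent is
$s\tbinom{r}{2}-(s-1)\tbinom{r+1}{2}-\tbinom{r+s}{2}+\tbinom{r+2s-1}{2}-\tbinom{r}{2}-s\tbinom{s}{2}+(s-1)\tbinom{s-1}{2}$,
which simplifies to $0$ identically. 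One point you pass over in silence deserves to be made explicit: your computation produces the ratio $\cD^{(r,s)}(\lambda)/\cD^{(r+1,s-1)}(\lambda)$, whereas the statement as printed has $\cD^{(r,s)}(\lambda)/\cD^{(r-1,s+1)}(\lambda)$ in both displays. These are not equal in general (e.g.\ $\cD^{(3,0)}\neq\cD^{(1,2)}$), and a direct check at small $(r,s)$ confirms that the right-hand side of the second display equals $\cD^{(r,s)}/\cD^{(r+1,s-1)}$, not $\cD^{(r,s)}/\cD^{(r-1,s+1)}$; so the printed indices are a (consistent) typo, the product of the two displays is the correct final formula, and you are in effect proving the corrected statement. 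Saying so explicitly would turn an apparent mismatch between what you derive and what you claim into a feature of the write-up rather than a gap.
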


\section{Plane Partitions/Dimer coverings}\label{plane-part}

A Plane Partition can be seen as a tiling of a regular hexagon of side length $N$
with the following three types of rhombi of unit side length
\begin{center}
\includegraphics[scale=.8]{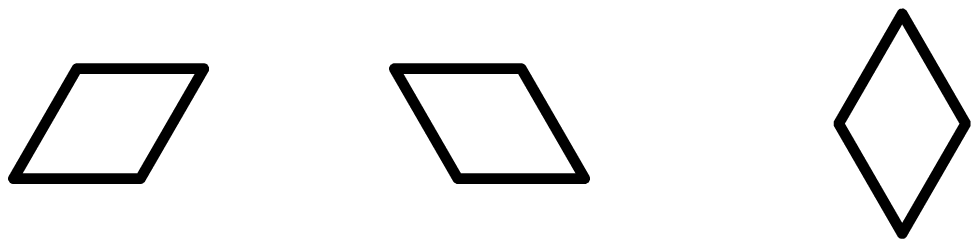}
\end{center}
A $k$-Punctured Cyclically Symmetric Self-Complementary Plane Partition (PCSSCPP) of size $2n$
is a Plane Partition symmetric under a $\pi/3$ rotation around the
center of the hexagon of side length $2n$ and
with a star shaped frozen region of size $k$ (see Figure \ref{figura1}).
 
Following closely Ciucu \cite{ciucu} we compute the enumeration of
$k$-PCSSCPP of size $2n$, that we call $CSSCPP(2n,k)$. It is well
known that Plane Partitions can be 
seen also as Dimer Coverings of an hexagonal graph. In the  case of the
$k$-PCSSCPP the graph is reported in Figure \ref{figura2}. 
\begin{figure}[!h]
\begin{center}
  \includegraphics[scale=.7]{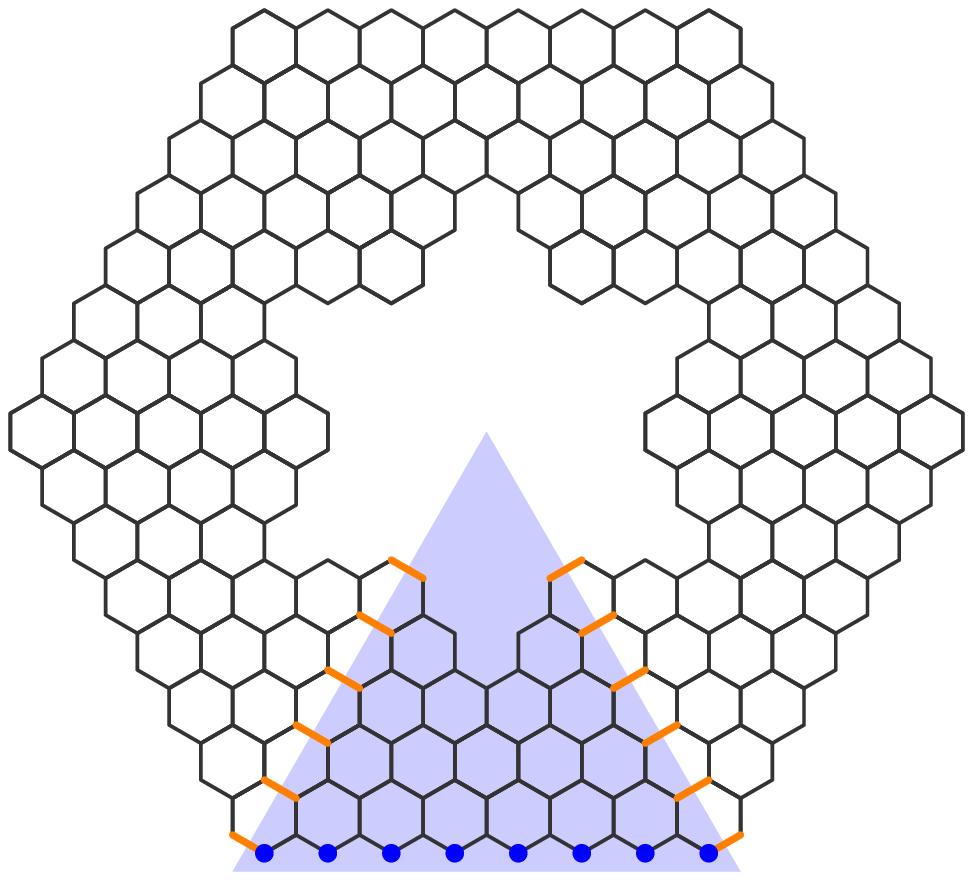}
  ~~~~~~\includegraphics[scale=.5]{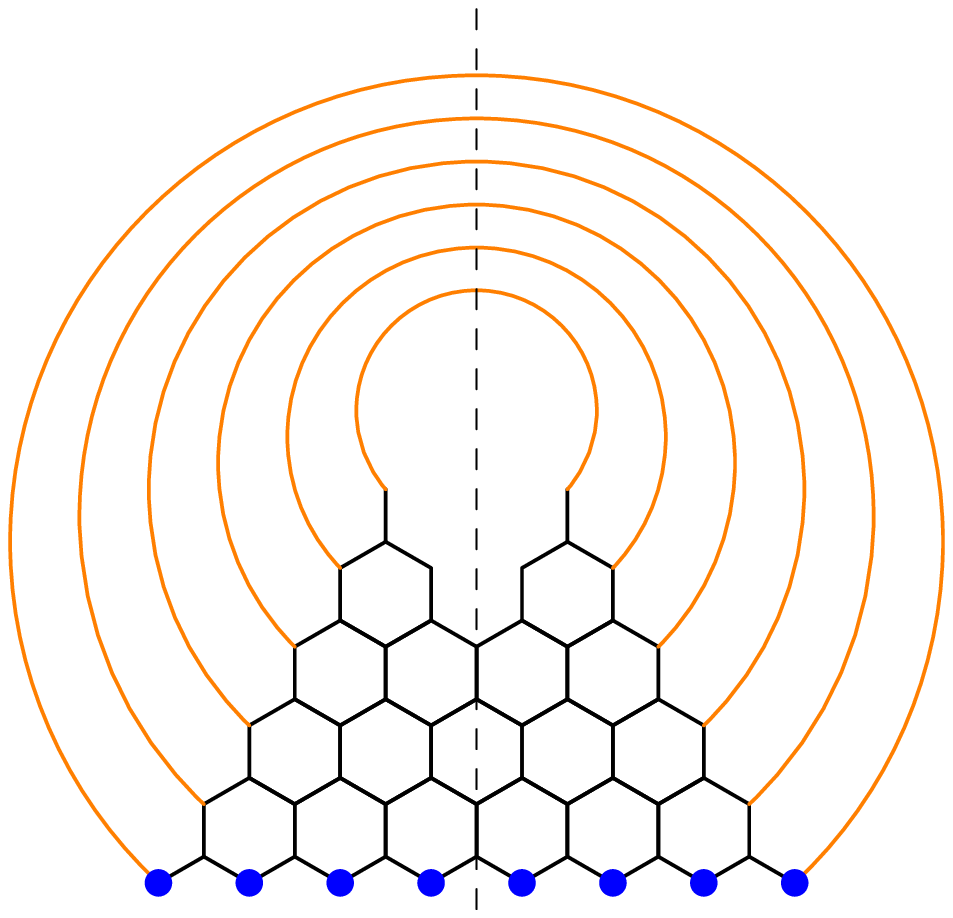}   
\caption[width=.7 \textwidth]{On the left the hexagonal graph corresponding to a
  $PCSSCPP$. The number of orange edges is $2n-k$ , while the number
  of boundary vertices on the bottom side is $2n$.
The shaded region is a fundamental domain. On the right is the graph
$G(n,k)$ obtained by restricting to the fundamental domain and making
the identifications of the orange edges imposed by the cyclic 
symmetry.}\label{figura2}  
\end{center}
\end{figure}
Thanks to the symmetry under a rotation of $\pi/3$,  it is sufficient
to consider dimer coverings  of the fundamental  domain with
``periodic boundary conditions'' which are nothing else than dimer
covering of a graph obtained by cutting the fundamental domain and
joining the opposite cutted edges as on the right of Figure
\ref{figura2}. Let us call $G(n,k)$ this graph. Notice that $G(n,k)$ is a
planar graph symmetric under reflection along the vertical axis. In
\cite{ciucu} Ciucu has proven that $M(G)$, the enumeration of dimer
coverings of a planar graph $G$ with reflection symmetry, is 
related to the weighted enumerations of dimers covering of different
graph $\tilde G$,  obtained from $G$  by
removing the edges incident to the  symmetry axis and lying on its
right (see Figure \ref{figura3}). The relation reads
$$M(G)=2^r M(\tilde G),$$
where $r$ is the number of edges lying on the
symmetry axis, and the weighted enumeration $M(\tilde G)$ is obtained 
by assigning a weight $1/2$ to each dimer lying on the symmetry axis.

In the case of PCSSCPP we are led to
consider weighted dimers covering of the graph $\tilde G(n,k)$
reported in Figure \ref{figura3}, which can also be seen as Rhombus Tilings of
the domain $\tilde D(n,k)$ or as Non-Intersecting Lattice paths
starting from the right boundary of 
$\tilde D(n,k)$ and ending  on its north-west boundary. This last
representation allows to use the Lindstr\"om-Gessel-Viennot theorem
and find 
\eq
CSSCPP(2n,k) = 2^{n-k}M(\tilde G(n,k))= \det [Q_{i+k,j+k}]_{1\leq
  i,j\leq n-k} 
\en
with 
\eq
Q_{i,j}= 2\binom{i+j-2}{2j-i-2}+\binom{i+j-2}{2j-i-1}
\en   
The determinant of $Q_{i+k,j+k}$ can be evaluated using Theorem 40 of
\cite{kratten} and we get
\eq
CSSCPP(2n,k) =  \prod_{h=1}^{n-k} \frac{(j-1)!(j+2k-1)!(3j+3k-2)!(3j+3k-2)!}
    {(2j+k-1)!(2j+k-2)!(2j+3k-1)!(2j+3k-2)!}
\en
from which one easily finds
\eq
\frac{CSSCPP(2n,k-1)}{CSSCPP(2n,k)}=\frac{(2k-2)!(2k-1)!(2n+k-1)!(n-k)!
}{(k-1)!(3k-3)!(3k-1)(2n-k)!(n+k-1)!}.
\en

\vskip 3cm
\begin{figure}[!h]
\begin{center}
  \includegraphics[scale=.5]{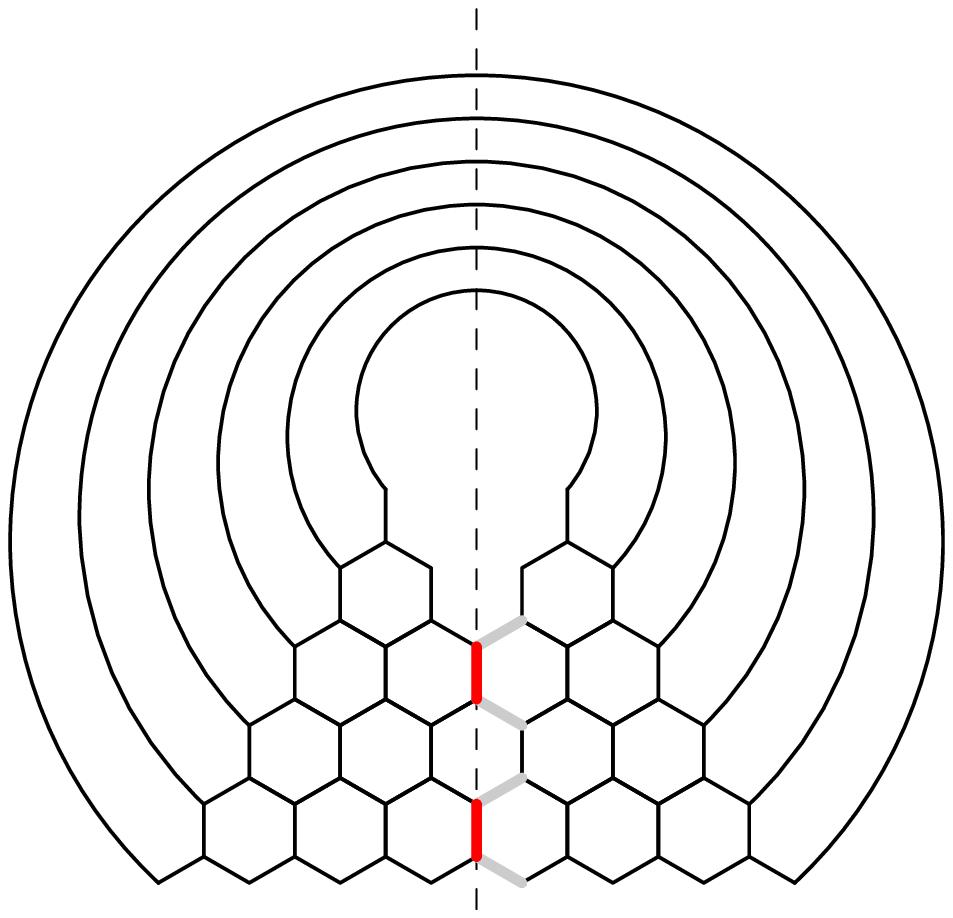} \hskip 3cm
  \includegraphics[scale=.5]{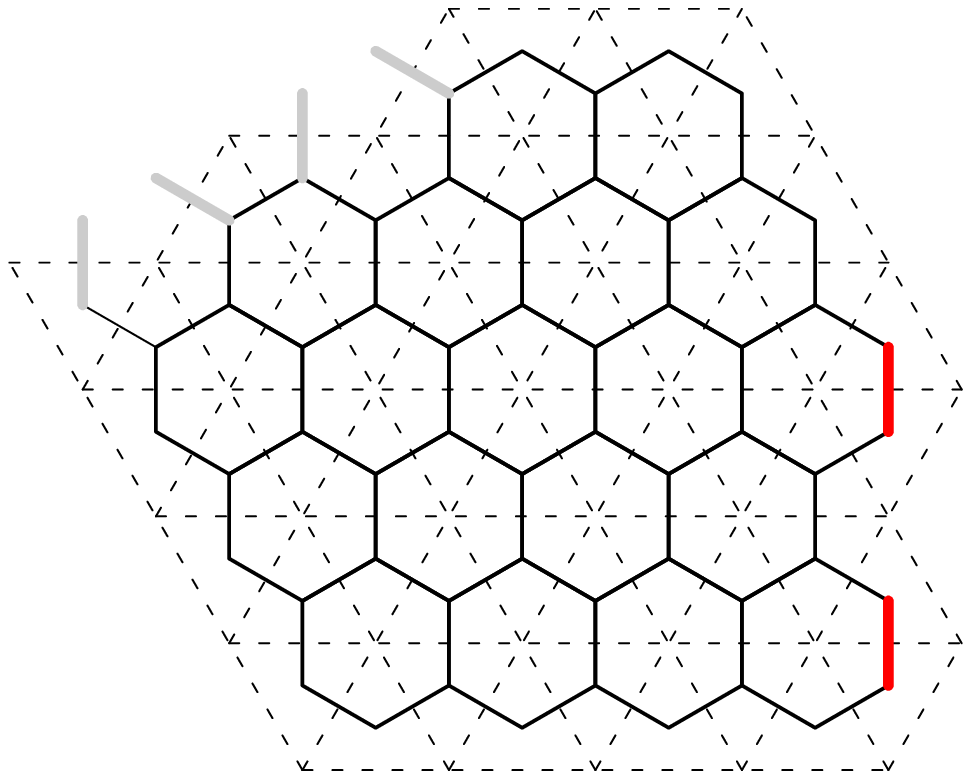}\\ 
\vskip .5cm
\includegraphics[scale=.5]{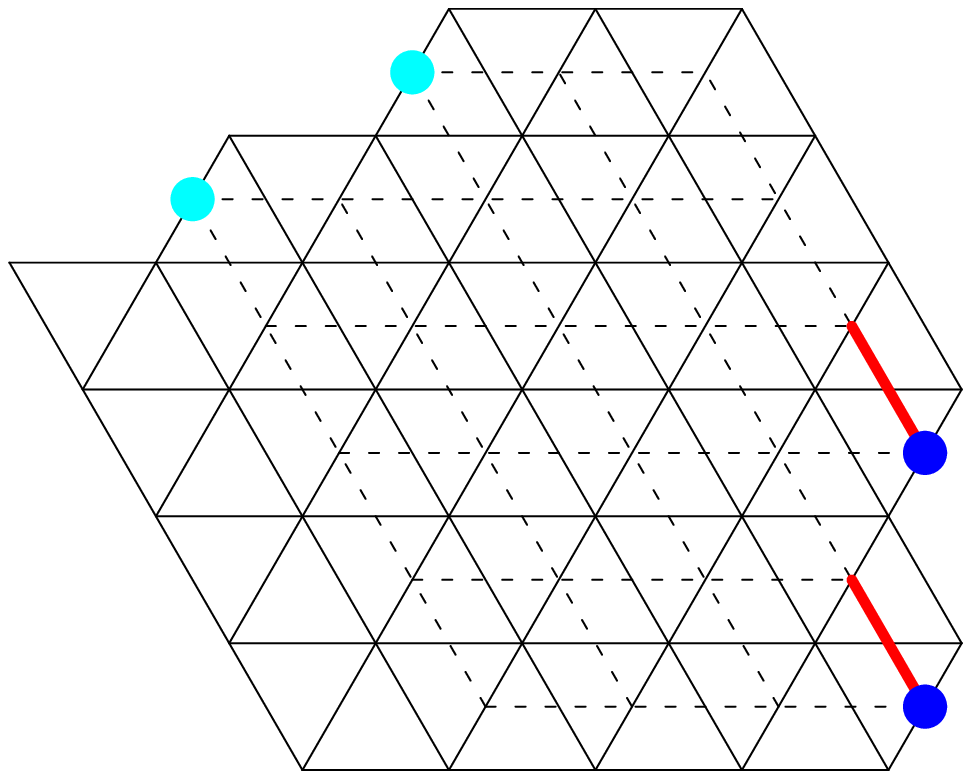} \hskip 3cm
\includegraphics[scale=.5]{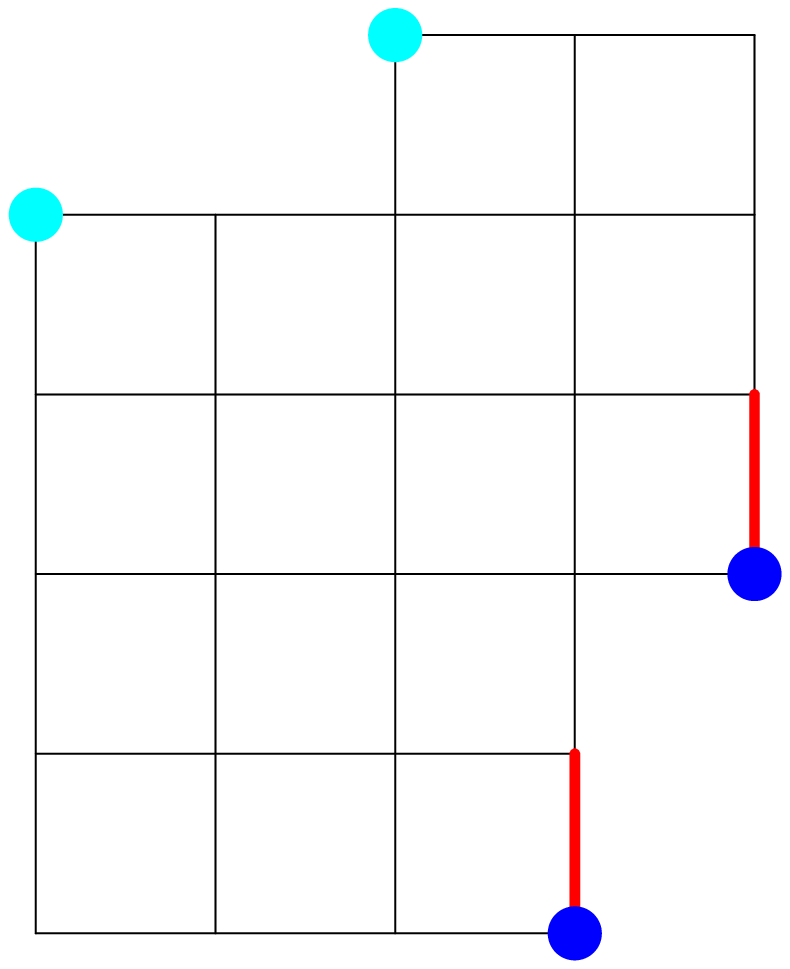} 
\end{center}
\begin{picture}(0,0)
\put(20,300){a)} 
\put(250,300){b)} 
\put(20,130){c)} 
\put(250,130){d)} 
\end{picture}
\caption{a) The domain $\tilde G(n,k)$, obtained from $\tilde G(n,k)$
  by removing the edges incident to the symmetry axis and lying on its
right. The dimers on the red edges, which lye on the symmetry axis,
have weight $1/2$. b) Another presentation of the graph $\tilde
G(n,k)$ and in dashed the corresponding domain $\tilde D(n,k)$. c) The
domain $\tilde D(n,k)$. In dashed is the square lattice on which run
the NILPs starting from the blue points and ending on the cyan
points. 
d) A redrawing of the the square lattice of the NILPs. Each path going
through a red edge gets a weight $1/2$.}\label{figura3}
\end{figure}

\end{document}